\documentclass[12pt, a4paper]{article}

\usepackage[margin=1in]{geometry} %page size
\usepackage{amsfonts, amscd, amssymb, mathtools, mathrsfs, dsfont, bbm, bbding} %math and other symbols
\usepackage[amsmath, amsthm, thmmarks]{ntheorem} %theorem style, auto qedhere
\usepackage{graphicx, xypic, color, float} %graph
\usepackage{indentfirst}%noindent
\usepackage{lmodern}%better output
\usepackage[T1]{fontenc} %better output
\usepackage{enumerate, listings, verbatim, paralist}%list%enumitem
\usepackage{setspace, xspace}%set space, shortward command
\usepackage[colorlinks=true, citecolor=blue]{hyperref}
\usepackage{extarrows}%used to define def
\usepackage{pdfpages}%insert pdf files
\usepackage{ulem}

\usepackage{setspace}
\AtBeginDocument{%
\addtolength\abovedisplayskip{-0.15\baselineskip}%
\addtolength\belowdisplayskip{-0.15\baselineskip}%
\addtolength\abovedisplayshortskip{-0.15\baselineskip}%
\addtolength\belowdisplayshortskip{-0.15\baselineskip}%
}

\newtheorem{thm}{Theorem} [section]
\newtheorem{prop}[thm]{Proposition}

\newtheorem{lemma}[thm]{Lemma}

\numberwithin{equation}{section}

\renewcommand{\geq}{\geqslant}
\renewcommand{\ge}{\geqslant}
\renewcommand{\leq}{\leqslant}
\renewcommand{\le}{\leqslant}
\newcommand{\citeprop}[1]{Proposition~\ref{#1}}

\newcommand{\citelem}[1]{Lemma~\ref{#1}}

\newcommand{\opfont}{\mathbb}

\newcommand{\BE}[2][]{\ensuremath{\operatorname{\opfont{E}}^{#1}\!\left[#2\right]}}
\newcommand{\bp}{\ensuremath{\opfont{P}}}
\newcommand{\bq}{\ensuremath{\opfont{Q}}}
\newcommand{\BP}[2][]{\ensuremath{\operatorname{\opfont{P}}^{#1}\!\left(#2\right)}}
\newcommand{\BF}{\ensuremath{\mathcal{F}}}

\newcommand{\R}{\ensuremath{\operatorname{\mathbb{R}}}}

\newcommand{\dd}{\ensuremath{\operatorname{d}\! }}
\newcommand{\dt}{\ensuremath{\operatorname{d}\! t}}
\newcommand{\ds}{\ensuremath{\operatorname{d}\! s}}

\newcommand{\ddp}{\ensuremath{\operatorname{d}\! p}}

\newcommand{\setq}{\mathscr{Q}^{+}}

\newcommand{\idd}[1]{\ensuremath{\operatorname{\mathds{1}}_{#1}}}

\newcommand{\essinf}{\ensuremath{\mathrm{ess\:inf\:}}}
\newcommand{\esssup}{\ensuremath{\mathrm{ess\:sup\:}}}
\newcommand{\barq}{\overline{Q}}
\newcommand{\barx}{\overline{X}}
\newcommand{\barp}{\bar{p}_{\lambda}}
\newcommand{\barh}{\overline{H}}
\newcommand{\ep}{\varepsilon}
\newcommand{\opl}{\Phi} 
 
\newcommand{\dsp}{\displaystyle}

\newcommand{\seta}{\mathscr{A}}

\newcommand{\pr}{\vartheta}
\newcommand{\qpr}{Q_{\pr}}

\def\tr{\intercal}
\newcommand{\al}{\alpha}
\newcommand{\lam}{\lambda}

\newcommand{\calJ}{\mathcal{J}}
\newcommand{\calL}{\mathcal{L}}

\newcommand{\vd}{\mathfrak{S}}

\begin{document}
%%%%%%%%%%%%%%%%%%%%%%%%%%%%%%%%%%%%%%%%%%%%%%%%
\title{$\al$-robust utility maximization with intractable claims: A quantile optimization approach
}
\author{Xinyu Chen\thanks{Department of Applied Mathematics, The Hong Kong Polytechnic University, Kowloon, Hong Kong SAR, China. Email: \url{24052241R@connect.polyu.hk}, \url{xyheartie@gmail.com}}
\and Zuo Quan Xu\thanks{Department of Applied Mathematics, The Hong Kong Polytechnic University, Kowloon, Hong Kong SAR, China. Email: \url{maxu@polyu.edu.hk}}}
\date{\today}
\maketitle 
\begin{abstract}
This paper studies an $\alpha$-robust utility maximization problem where an investor faces an intractable claim --- an exogenous contingent claim with known marginal distribution but unspecified dependence structure with financial market returns. The $\alpha$-robust criterion interpolates between worst-case ($\alpha=0$) and best-case ($\alpha=1$) evaluations, generalizing both extremes through a continuous ambiguity attitude parameter. For weighted exponential utilities, we establish via rearrangement inequalities and comonotonicity theory that the $\alpha$-robust risk measure is law-invariant, depending only on marginal distributions. This transforms the dynamic stochastic control problem into a concave static quantile optimization over a convex domain. We derive optimality conditions via calculus of variations and characterize the optimal quantile as the solution to a two-dimensional first-order ordinary differential equation system, which is a system of variational inequalities with mixed boundary conditions, enabling numerical solution. Our framework naturally accommodates additional risk constraints such as Value-at-Risk and Expected Shortfall. Numerical experiments reveal how ambiguity attitude, market conditions, and claim characteristics interact to shape optimal payoffs.
\end{abstract}
\noindent\textbf{Keywords:} $\alpha$-robust utility, intractable claim, quantile formulation, variational analysis, ambiguity attitude\medskip\\
\noindent\textbf{MSC 2020:} 91B28, 91G10, 49N90, 35Q93

\section{Introduction}\label{sec:intro}

Portfolio selection constitutes a cornerstone of mathematical finance and economic theory. Since its modern formulation by Merton \cite{M69}, expected utility maximization (EUM) has provided the predominant normative framework for investment decisions under uncertainty. In a representative setting, an investor with initial endowment $x>0$ seeks an admissible trading strategy $\pi$ to maximize the expected utility of terminal wealth:
\begin{align}\label{eut}
\sup\limits_{\pi} \BE{u(X^{\pi}(T))},
\end{align}
where $X^{\pi}(\cdot)$ denotes the wealth process under strategy $\pi$, $u:\R_+ \to \R$ is an increasing and concave utility function representing investor's risk preference, and the expectation is taken under a reference probability measure $\bp$ that models the investor's beliefs about market dynamics. When market information is complete and the probabilistic structure is known, the problem \eqref{eut} can be solved using standard tools from stochastic optimal control \cite{KS98,YZ99}.

The classical EUM framework, however, relies on several important assumptions that limit its practical applicability. A principal limitation is its inability to address \emph{model ambiguity} --- situations where the investor lacks confidence in the exact probabilistic specification or faces uncertainty regarding parameter values and dependence structures. In response to this limitation, several generalizations have emerged:

\begin{description}
\item[Partial Information Models.] In practice, investors may not observe all relevant state variables and must infer information from observable processes. This leads to partial information control problems, typically addressed using filtering theory and adaptive control methods (see, e.g., \cite{XZ07,WWX13,XXZ21}).

\item[Ambiguity-Averse Robust Optimization.] To hedge against model misspecification, an investor may consider a family $\mathcal{Q}$ of plausible probability measures centered around a reference measure $\bp$. The robust counterpart of \eqref{eut} takes either a constrained form:
\[
\sup_{\pi \in \seta} \inf_{\bq \in \mathcal{Q}} \BE[\bq]{u(X^{\pi}(T))}, \quad \text{with } \mathcal{Q} = \{\bq: \mathbf{D}(\bq\|\bp) \leq \delta\},
\]
or a penalized form: 
\[
\sup_{\pi \in \seta} \inf_{\bq} \left\{\BE[\bq]{u(X^{\pi}(T))} + \beta \mathbf{D}(\bq\|\bp)\right\},
\]
where $\mathbf{D}$ is a divergence measure (e.g., relative entropy, $f$-divergence). Such formulations provide protection against worst-case scenarios within an ambiguity set \cite{GS1989,HS2008,FS2009}.

\item[$\al$-Maxmin Expected Utility.] Pure maxmin models are often criticized for excessive pessimism. Experimental and empirical evidence suggests that individuals exhibit heterogeneous and context-dependent attitudes toward ambiguity \cite{HT1991,D2016}. The $\al$-maxmin model \cite{M02,LLX19,YLZL25} interpolates between worst-case and best-case evaluations:
\[
(1-\al)\inf_{\bq \in \mathcal{Q}} \BE[\bq]{u(X^{\pi}(T))} + \al\sup_{\bq \in \mathcal{Q}} \BE[\bq]{u(X^{\pi}(T))}, \quad \al \in [0,1].
\]
The parameter $\al$ quantifies the investor's ambiguity attitude: $\al=1$ corresponds to extreme ambiguity seeking (maxmax), $\al=0$ to extreme ambiguity aversion (maxmin), and intermediate values represent mixed attitudes.
\end{description}

A common feature of the aforementioned models is their exclusive focus on \emph{endogenous} wealth generated from financial market investments. In practice, investors often hold \emph{exogenous} contingent claims --- such as lottery winnings, insurance liabilities, inheritance, or non-traded options --- whose payoff is not replicated by financial market assets. 
When the dependence between such claims and market returns is unknown or unspecified, they are termed ``intractable claims'' in the terminology introduced by Hou and Xu \cite{HX16} and later studied by Wang, Yue, and Huang \cite{WYH17}, Li, Xu and Zhou \cite{LXZ23}, and Maity, Bera, and Selvaraju \cite{MBS25}.
 For such claims, only the marginal distribution is available to the investor, while their joint dependence with market assets remains completely unspecified.

Li, Xu and Zhou \cite{LXZ23} study a worst-case robust EUM problem with an intractable claim $\pr$:
\begin{align}\label{defJ}
\inf\limits_{Y\sim \pr}\;\BE{u(X^{\pi}(T)+Y)},
\end{align}
where the infimum is taken over all random variables $Y$ with the same (marginal) distribution as $\pr$. While this formulation provides a conservative hedge against worst-case dependence, it may be overly pessimistic for investors with non-extreme ambiguity attitudes.

Related but distinct studies address utility maximization in incomplete markets with random endowments, contingent claims, or non-tradable assets \cite{D97,CSW01,HK04,HKS05}. In these models, the contingent claim is a known random variable or process; consequently, even when it is not hedgeable, the investor can acquire more information about it as time approaches maturity. In contrast, our work assumes knowledge only of the marginal distribution of the intractable claim, while its dependence structure with market returns remains completely unspecified. Thus, the intractable claim framework introduced by \cite{HX16} and further developed by \cite{LXZ23} serves as the more relevant benchmark for our analysis. Incorporating this realistic feature into portfolio optimization models is a problem of substantial practical and theoretical importance.

This paper bridges the gap between $\al$-maxmin preferences and intractable claims by studying the following \emph{$\al$-robust utility maximization} problem with intractable claim:
\begin{align*}%\label{eut+0}
\sup\limits_{\pi} J_{\al}(X^{\pi}(T)),
\end{align*}
where the \emph{$\al$-robust risk measure} $J_{\al}$ is defined as
\begin{align*}%\label{Jdef0}
J_{\al}(X):=(1-\al) \inf\limits_{Y\sim \pr}\;\BE{u(X+Y)}+\al\sup\limits_{Y\sim \pr}\;\BE{u(X+Y)}, \quad \al \in [0,1].
\end{align*}
This measure generalizes both extremes: it recovers the worst-case evaluation \eqref{defJ} when $\al = 0$ and the best-case evaluation when $\al = 1$. Thus, the parameter $\al \in [0, 1]$ introduces a continuous spectrum of ambiguity attitudes, interpolating between pure pessimism ($\al = 0$) and pure optimism ($\al = 1$).

The presence of the intractable claim $\pr$, whose joint dependence structure with the investment output $X^{\pi}(T)$ is unspecified, precludes the direct application of classical stochastic control methods. Our analysis addresses this challenge through the following methodological steps:

\begin{enumerate}
\item \textbf{Representation via Rearrangement Theory.} For a broad class of weighted exponential utilities (which extends the classical exponential utility), we employ rearrangement inequalities and comonotonicity theory \cite{DDGKV2002a,DDGKV2002b,M97} to obtain explicit representations of the extreme values in $J_{\al}$. These representations depend solely on the \emph{marginal} distributions of $X$ and $\pr$, transforming $J_{\al}$ into a law-invariant risk measure. This crucial simplification circumvents the need for any conditions on the joint dependence structure between the portfolio output and the intractable claim.

\item \textbf{Quantile Reformulation.} Exploiting the law-invariance of $J_{\al}$, we reformulate the $\al$-robust utility maximization problem as a static \emph{quantile optimization problem}. Specifically, we optimize over the set of admissible quantiles of terminal wealth. This transformation possesses two key advantages: (i) it converts the original non-concave dynamic problem into a concave maximization over a convex domain, and (ii) it translates the probabilistic constraints into tractable analytical conditions on quantiles.

\item \textbf{Optimality Characterization.} We derive first-order necessary and sufficient conditions for optimality in the quantile formulation. The optimal quantile is characterized as the solution to a novel two-dimensional first-order ordinary differential equation (ODE) system. While closed-form solutions are generally unavailable due to the equation's nonlinear structure, the ODE system provides a rigorous foundation for numerical analysis and reveals the structural properties of optimal policies.

\item \textbf{Numerical Analysis and Financial Implications.} We develop a numerical scheme to solve the characterizing ODE system and perform comprehensive sensitivity analyses. Our numerical studies yield several new financial insights regarding the interplay between ambiguity attitude, risk aversion, and the presence of intractable claims.
\end{enumerate}

The remainder of this paper is organized as follows. This section concludes with some preliminary notations and conventions. In Section~\ref{sec:pf}, we present the financial market model, define the class of admissible trading strategies, and review key results from rearrangement theory that are essential for our analysis. Section~\ref{sec:qf} analyzes the $\al$-robust risk measure $J_{\al}$, establishing its law-invariance and key properties. Section~\ref{sec:solution} develops the quantile formulation of the main problem and derives the associated optimality conditions. Section~\ref{sec:numerical} performs numerical studies based on the theoretical results obtained in the previous section. Finally, Section~\ref{sec:conclusion} concludes and suggests directions for future research. 

\paragraph{Notations and Conventions.}
For any cumulative distribution function $F$, we define its associated quantile (or generalized inverse) function as
\[
Q_{F}(p) := \inf\big\{z\in\R \mid F(z) > p\big\}, \quad p \in (0,1),
\]
which is the right-continuous inverse of $F$. Since $Q_F$ is non-decreasing, we may extend its domain by setting
\[
Q_{F}(0) := \lim_{p\downarrow 0} Q_{F}(p) \quad \text{and} \quad Q_{F}(1) := \lim_{p\uparrow 1} Q_{F}(p).
\]
For a random variable $Y$, denote by $F_{Y}$ its distribution function and by $Q_{Y} := Q_{F_Y}$ its quantile. One readily verifies that
$Q_Y(0) = \essinf Y$ and $Q_Y(1) = \esssup Y$. We write $X \sim Y$ to indicate that two random variables $X$ and $Y$ have the same distribution, and $X \sim F$ if $X$ follows distribution $F$.

Every quantile is right-continuous and non-decreasing (henceforth abbreviated as \emph{RCI}). Consequently, it admits a left limit at every point. Conversely, any RCI function $Q: (0,1) \to \R$ is the quantile of the random variable $Q(U)$, where $U \sim \mathrm{Uniform}(0,1)$. Thus, the class of quantiles coincides with the class of RCI functions on $(0,1)$.

In the sequel, the qualifiers ``almost everywhere'' (a.e.) and ``almost surely'' (a.s.) may be omitted when no confusion can arise. Finally, for a matrix or vector $M$, we denote by $M^{\tr}$ its transpose, and by $|M|$ the Frobenius norm $\sqrt{\operatorname{trace}(MM^{\tr})}$.

\section{Problem Formulation}\label{sec:pf}
Throughout this paper, we fix a probability space $(\Omega, \mathbb{F}, \bp)$ satisfying the usual assumptions, along with a standard $m$-dimensional Brownian motion $$W=\{(W_1(t), \cdots, W_m(t))^{\tr}, \;t\geq0\}$$ representing the uncertainties/risks of the financial market under consideration (to be described below). Unless otherwise stated, 
a random variable is the shorthand for an $\mathbb{F}$-measurable random variable. We also fix an investment horizon $[0, T]$ where $T>0$ is a constant maturity. Let $\{\BF_t\}_{t\geq0}$ be the filtration generated by $W$ complemented by all the $\bp$-null sets, 
and $L^{0}_{\BF_{T}}$ be the set of $\BF_{T}$-measurable random variables.
We stress that $\BF_{T}\subsetneqq\mathbb{F}$; so there is randomness \emph{outside} of the financial market or, equivalently, not every random variable belongs to $L^{0}_{\BF_{T}}$.

\subsection{Financial Market Model}
\noindent
Consider a continuous-time arbitrage-free financial market consisting of $m+1$ assets traded continuously on the time horizon $[0,T]$. One asset is a risk-free bond whose price $S_0(\cdot)$ evolves according to the ordinary differential equation
\begin{equation*}
\begin{cases}
\dd S_0(t) = r(t) S_0(t) \dt, & t \in [0,T], \\[4pt]
S_0(0) = s_0 > 0,
\end{cases}
\end{equation*}
where $r(t)$ denotes the instantaneous risk-free rate at time $t$. The remaining $m$ assets are risky stocks whose prices satisfy the stochastic differential equations
\begin{equation*}
\begin{cases}
\dd S_i(t) = S_i(t)\Big[b_i(t)\dt + \displaystyle\sum_{j=1}^m \sigma_{ij}(t) \dd W_j(t)\Big], & t \in [0,T], \\[4pt]
S_i(0) = s_i > 0,
\end{cases}
\end{equation*}
for $i = 1,\dots,m$. Here $b_i(t)$ is the appreciation rate of stock $i$, $\sigma_{ij}(t)$ are volatility coefficients. 

Define the appreciation rate vector $b(t) := (b_1(t),\dots,b_m(t))^\tr$, the volatility matrix $\sigma(t) := (\sigma_{ij}(t))_{m\times m}$, and the excess return vector $b(t) - r(t)\mathbf{1}_m$, where $\mathbf{1}_m$ denotes the $m$-dimensional vector of ones. Assuming $\sigma(t)$ is invertible for all $t$, we introduce the market price of risk process $\theta := \{\theta(t), t\in[0,T]\}$ by
\begin{equation*}
\theta(t) := \sigma(t)^{-1}\big(b(t) - r(t)\mathbf{1}_m\big), \quad t\in[0,T].
\end{equation*}
\textbf{Parameter Specification.}
Following \cite{KS98}, we impose the following standing assumptions throughout the paper:
\begin{compactitem}
\item The processes $r$, $b$ and $\sigma$ are $\{\mathcal{F}_t\}$-progressively measurable;
\item The interest rate $r$ is lower bounded, and the appreciation rate $b$ satisfies $\int_0^T |b(t)| \dt < \infty$ almost surely;
\item The market price of risk $\theta$ is not identically zero and satisfies $\int_0^T |\theta(t)|^2 \dt < \infty$ almost surely;
\item The stochastic exponential process
\[
Z(t) := \exp\left(-\int_0^t \theta(s)^\tr \dd W(s) - \frac{1}{2}\int_0^t |\theta(s)|^2 \ds\right), \quad t\in[0,T],
\]
is a true martingale.
\end{compactitem}
The last condition is guaranteed, for instance, by the Novikov condition $$\mathbb{E}\big[\exp\big(\frac{1}{2}\int_0^T |\theta(s)|^2 \ds\big)\big] < \infty.$$

These assumptions ensure that the financial market is \emph{standard and complete} in the sense of \cite[Definition 5.1, page 17]{KS98} and \cite[Theorem 6.6, page 24]{KS98}. Consequently, every contingent claim admits a unique replicating portfolio, and the state price density is uniquely determined.

\subsection{Investment Problem}
\noindent
Consider a small investor (``she'') whose transactions do not influence asset prices. She has an initial endowment $x>0$ and invests in the financial market over $[0,T]$. Let $\pi_i(t)$ denote the total market value of her wealth invested in stock $i$ at time $t$, $i=1,\dots,m$. Short selling is allowed, so $\pi_i(t)$ may be negative. Trading occurs continuously in a self-financing manner (no consumption or income) within a frictionless market (no transaction costs).

Define the portfolio process $\pi := \{(\pi_1(t),\dots,\pi_m(t))^\tr, t\in[0,T]\}$. The corresponding wealth process $X^\pi$ evolves according to the stochastic differential equation (see \cite{KS98})
\begin{align}\label{eq:state-positive}
\begin{cases}
\dd X^\pi(t) = \big[r(t)X^\pi(t) + \pi(t)^\tr \sigma(t) \theta(t)\big]\dt + \pi(t)^\tr \sigma(t) \dd W(t), & t\in[0,T], \\[4pt]
X^\pi(0) = x > 0.
\end{cases}
\end{align}

The investor makes decisions based on the partial information filtration $\{\mathcal{F}_t\}_{t\in[0,T]}$, which is strictly coarser than the full filtration. Thus, our model falls within the partial information framework.\medskip\\
\textbf{Admissible Portfolios.} A portfolio $\pi$ is called \emph{admissible} if:
\begin{compactitem}
\item It is $\{\mathcal{F}_t\}_{t\in[0,T]}$-progressively measurable;
\item It satisfies $\int_0^T |\sigma(t)^\tr \pi(t)|^2 \dt < \infty$ almost surely;
\item The corresponding wealth process $X^\pi$ from \eqref{eq:state-positive} exists and remains nonnegative almost surely.
\end{compactitem}
By no-arbitrage, the last condition is equivalent to requiring $X^\pi(T) \ge 0$ almost surely. For any admissible portfolio, \eqref{eq:state-positive} admits a unique continuous solution $X^\pi$, and $(X^\pi,\pi)$ is called an \emph{admissible pair}. Due to linearity of \eqref{eq:state-positive}, the set of admissible pairs is convex. Unless otherwise specified, we consider only admissible portfolios henceforth. \medskip\\
\textbf{Intractable Claim and Objective.} In addition to investment gains, the investor receives at maturity $T$ an intractable claim $\pr$ (e.g., lottery winnings, insurance liabilities, inheritance, or non-traded options) whose payoff is not replicated by market assets. Because the joint dependence between $\pr$ and market returns is unknown, evaluating expected utility of total terminal wealth $\mathbb{E}[u(X^\pi(T)+\pr)]$ is not meaningful.

As motivated in the introduction, we study the $\alpha$-robust utility maximization problem with intractable claim:
\begin{align}\label{eut+}
\sup_{\pi} J_\alpha(X^\pi(T)),
\end{align}
where $J_\alpha$ is the $\alpha$-robust risk measure: 
\begin{align}\label{Jdef}
J_\alpha(X)= (1-\alpha)\inf_{Y\sim\pr}\mathbb{E}[u(X+Y)] + \alpha\sup_{Y\sim\pr}\mathbb{E}[u(X+Y)], \quad \alpha\in[0,1]. \smallskip
\end{align}
\textbf{Utility Specification.} To ensure tractability, we focus on weighted exponential utility functions of the form
\begin{align}\label{WEU}
u(x) = -\int_0^\infty e^{-\gamma x} \dd F(\gamma), \quad x\in\mathbb{R},
\end{align}
where $F$ is a non-zero distribution function supported on a compact subset of $(0,\infty)$. This class includes the classical exponential utility as a special case and admits the convenient analytical properties:
\begin{compactitem}
\item $u$ is $C^\infty$-smooth with $k$-th derivative: 
\begin{align}\label{uk1}
u^{(k)}(x) = (-1)^{k+1}\int_0^\infty \gamma^k e^{-\gamma x} \dd F(\gamma), \quad x\in\mathbb{R};
\end{align}
\item $\lim_{x\to-\infty} u'(x) = +\infty$ and $\lim_{x\to+\infty} u'(x) = 0$ (by monotone convergence);
\item $u^{(k)}$ is strictly decreasing for odd $k$ and strictly increasing for even $k$; consequently, $u$ is strictly increasing and strictly concave.\medskip
\end{compactitem}
\textbf{Intractable Claim Specification.} We assume the quantile function $Q_\pr$ of the intractable claim $\pr$ is Lipschitz continuous and differentiable on $[0,1]$, which implies boundedness of $\pr$. This regularity facilitates the subsequent analysis.
\medskip\\
\textbf{Methodological Approach.} The unknown dependence between $\pr$ and $X^\pi(T)$ precludes classical stochastic control methods (e.g., stochastic maximum principle, dynamic programming in \cite{YZ99}). Following \cite{HX16,LXZ23}, we first transform problem \eqref{eut+} into a static optimization problem via the \emph{quantile reformulation}, which we develop in the next section.

\section{Quantile Formulation}\label{sec:qf}

Problem \eqref{eut+} is a dynamic stochastic control problem that falls outside the scope of classical methods such as the stochastic maximum principle or dynamic programming (see \cite{YZ99}). Following \cite{HX16,LXZ23}, we first transform it into a static quantile optimization problem and then apply the quantile method. The quantile method is a powerful tool in studying behavioral financial models; we refer to \cite{HZ11,X16,JZ08,B2015,WX2021,CSZ2024} and references therein for recent developments. 

\subsection{A Static Optimization Problem}

Employing the well-established martingale approach, we decompose the solution of \eqref{eut+} into two steps. First, we solve the static optimization problem 
\begin{eqnarray}\label{static1}
\begin{array}{rl}
\sup\limits_{X\in L^{0}_{\BF_{T}}} \quad & J_{\al}(X) \\ [2mm]
\mathrm{subject\ to} \quad & X\in\seta_{x}, 
\end{array}
\end{eqnarray}
where $\seta_x$ denotes the set of all attainable nonnegative terminal wealth levels with initial endowment $x>0$. By linearity of the wealth SDE \eqref{eq:state-positive}, $\seta_x$ is convex. Second, we find an admissible portfolio $\pi^*$ that replicates the optimal terminal wealth $\barx$; this replication step is standard in a complete market (see \cite{KS98}). Hence, our primary focus is on problem \eqref{static1}.

To proceed, we require a tractable characterization of $\seta_x$. The following classical result provides such a description (see \cite{KS98}).

\begin{lemma}\label{martingalemethod}
For any $x>0$, the set $\seta_x$ satisfies
\begin{align}\label{seta}
\big\{X \in L^0_{\mathcal{F}_T} : \mathbb{E}[\rho X] = x,\; X \ge 0\big\} \subseteq \seta_x \subseteq \big\{X \in L^0_{\mathcal{F}_T} : \mathbb{E}[\rho X] \le x,\; X \ge 0\big\},
\end{align}
where $\rho$ is the pricing kernel (also called stochastic discount factor) defined by
\begin{equation*}
\rho := \exp\left(-\int_0^T \Big(r(s) + \frac{1}{2}|\theta(s)|^2\Big)\ds - \int_0^T \theta(s)^\tr \dd W(s)\right).
\end{equation*}
\end{lemma}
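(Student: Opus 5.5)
The plan is the standard martingale (state-price deflator) argument, proving the two inclusions separately. Write $H(t):=Z(t)\exp\bigl(-\int_0^t r(s)\ds\bigr)$, so that $H(T)=\rho$. Applying It\^o's formula to $H(t)X^\pi(t)$ with the wealth equation \eqref{eq:state-positive}, the drift terms $rX^\pi+\pi^\tr\sigma\theta$ cancel exactly against the drift of $H$ and the cross-variation term $-\pi^\tr\sigma\theta H$. This leaves
\[
\dd\bigl(H(t)X^\pi(t)\bigr)=H(t)\bigl(\sigma(t)^\tr\pi(t)-X^\pi(t)\theta(t)\bigr)^\tr \dd W(t).
\]
Hence $HX^\pi$ is a continuous local martingale, using the integrability conditions $\int_0^T|\sigma^\tr\pi|^2\dt<\infty$ and $\int_0^T|\theta|^2\dt<\infty$ together with the continuity of $H$ and $X^\pi$.

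\textbf{Upper inclusion.} Let $X=X^\pi(T)\in\seta_x$ for an admissible $\pi$. Since $X^\pi\ge0$ and $H>0$, the process $HX^\pi$ is a nonnegative local martingale, hence a supermartingale by Fatou's lemma. Therefore $\BE{\rho X}=\BE{H(T)X^\pi(T)}\le H(0)X^\pi(0)=x$. Measurability with respect to $\BF_T$ and $X\ge0$ hold by construction, which gives the right-hand inclusion.

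\textbf{Lower inclusion.} Let $X\in L^0_{\BF_T}$ satisfy $X\ge0$ and $\BE{\rho X}=x$. Define the martingale $M(t):=\BE{\rho X\mid\BF_t}$; it is nonnegative, and $M(0)=x$ because $\BF_0$ is trivial. Since $\{\BF_t\}$ is the augmented Brownian filtration, the martingale representation theorem gives a progressively measurable $\psi$ with $\int_0^T|\psi|^2\dt<\infty$ a.s. and
\[
M(t)=x+\int_0^t\psi(s)^\tr\dd W(s).
\]
We then set $X(t):=M(t)/H(t)$ and read off the portfolio from the differential identity above, i.e.\ we solve
\[
H\bigl(\sigma^\tr\pi-X\theta\bigr)=\psi
\qquad\text{so that}\qquad
\pi(t)=(\sigma(t)^\tr)^{-1}\Bigl(\frac{\psi(t)}{H(t)}+X(t)\theta(t)\Bigr),
\]
which is possible because $\sigma(t)$ is invertible. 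Reversing the It\^o computation shows that $X(\cdot)$ solves \eqref{eq:state-positive} with $X(0)=x$ and $X(T)=X$. Moreover $X(t)\ge0$, because $M\ge0$ and $H>0$.

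\textbf{Main obstacle.} The delicate step is checking admissibility in the lower inclusion, specifically that $\int_0^T|\sigma^\tr\pi|^2\dt<\infty$ a.s. This follows from pathwise continuity: $1/H$ and $X(\cdot)$ are continuous, hence bounded on $[0,T]$ pathwise, and combined with the square-integrability of $\psi$ and $\theta$ this yields the required bound. It is the same argument as in \cite[Theorem 6.6]{KS98}, which we would simply invoke for completeness of the market.
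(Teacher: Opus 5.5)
Your proof is correct. The paper gives no proof of its own and simply cites \cite{KS98}, and your argument is exactly that classical one: the upper inclusion from $H X^\pi$ being a nonnegative local martingale (hence a supermartingale), and the lower inclusion from Brownian martingale representation together with the invertibility of $\sigma$.
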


The inequality $\mathbb{E}[\rho X] \le x$ is known as the \emph{budget constraint}. Under our standing assumptions, $\theta$ is not identically zero, so $\rho$ is a non-degenerate random variable. Moreover, since the stochastic exponential in $\rho$ is a martingale (hence has expectation one at all times) and $r$ is lower bounded, we have $\mathbb{E}[\rho] < \infty$.
\medskip\\
\textbf{Pricing Kernel Specification.}
To avoid technical complications, we impose the following regularity conditions on $\rho$:
\begin{compactitem}
\item The quantile function $Q_\rho$ of $\rho$ is continuously differentiable;
\item $Q_\rho(0) = 0$ and $Q_\rho(1) = +\infty$ (so $\rho$ is unbounded above). \medskip
\end{compactitem}
These conditions are satisfied, for instance, in the Black-Scholes market where $\rho$ follows a log-normal distribution. Under these assumptions, the distribution function $F_\rho$ of $\rho$ is continuous and satisfies the probability integral transform: $F_\rho(\rho) \sim \mathrm{Uniform}(0,1)$ (see \cite{X14}).

\subsection{Quantile Reformulation}

We now employ the quantile formulation method, which replaces the decision variable $X$ (a random variable) with its quantile function (a deterministic function). This transformation exploits the law-invariance of $J_\alpha$ and yields a tractable optimization problem over quantile functions. The following famous rearrangement inequality is fundamental to this reformulation (see \cite{DDGKV2002a}). 
\begin{lemma}\label{maxminlemma}
Given a random variable $X$ and a probability distribution function $\mu$, we have 
\begin{align}\label{max}
\max_{Y\sim \mu}\mathbb{E}[XY] = \int_0^1 Q_X(p) Q_\mu(p) \ddp=\mathbb{E}[XY_1] 
\end{align}
and
\begin{align}\label{min}
\min_{Y\sim \mu}\mathbb{E}[XY] = \int_0^1 Q_X(p) Q_\mu(1-p) \ddp= \mathbb{E}[XY_2], 
\end{align}
provided the integrals are well-defined, 
where $Y_1$, $Y_2 \sim\mu $, $X$ and $Y_1$ are comonotonic, and $X$ and $Y_2$ are anti-comonotonic.
\end{lemma}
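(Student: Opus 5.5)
The plan is to reduce both identities to a pointwise comparison on a common uniform variable, using the Hardy–Littlewood rearrangement inequality in its quantile form. I treat the maximum first; the minimum then follows by symmetry.

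\textbf{Step 1: the value $\int_0^1 Q_X(p)Q_\mu(p)\ddp$ is attained.} Since the probability space need not be atomless on $\sigma(X)$, I would use the extra randomness outside $\BF_T$, or simply assume the space is rich enough. With it, build $U\sim\mathrm{Uniform}(0,1)$ such that $X=Q_X(U)$ a.s.; this is the randomized probability integral transform, $U=F_X(X^-)+V\,(F_X(X)-F_X(X^-))$ with $V$ an independent uniform. Put $Y_1:=Q_\mu(U)$ and $Y_2:=Q_\mu(1-U)$. Then $Y_1\sim\mu$, and $Y_2\sim\mu$ because $1-U$ is again uniform. Moreover $X$ and $Y_1$ are nondecreasing functions of the same $U$, so they are comonotonic, while $Y_2$ is nonincreasing in $U$, so $X$ and $Y_2$ are anti-comonotonic. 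Finally, $\mathbb{E}[XY_1]=\mathbb{E}[Q_X(U)Q_\mu(U)]=\int_0^1Q_X(p)Q_\mu(p)\ddp$, and similarly for $Y_2$.

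\textbf{Step 2: the upper bound $\mathbb{E}[XY]\le\int_0^1 Q_XQ_\mu\,\ddp$ for every $Y\sim\mu$.} I would use the layer-cake (Hoeffding) representation. First reduce to nonnegative variables, or to bounded ones by truncation. For $X,Y\ge0$,
\[
\mathbb{E}[XY]=\int_0^\infty\!\!\int_0^\infty \BP{X>s,\,Y>t}\ds\,\dt .
\]
The Fréchet bound gives $\BP{X>s,Y>t}\le\min\{\BP{X>s},\BP{Y>t}\}$, and equality holds for the comonotone pair $(Q_X(U),Q_\mu(U))$, since $\{Q_X(U)>s\}$ and $\{Q_\mu(U)>t\}$ are nested upper intervals of $U$. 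This yields the inequality. For variables of general sign, I would shift them when they are bounded below. Otherwise I would split into positive and negative parts, using the lower Fréchet bound $\max\{\BP{X>s}+\BP{Y>t}-1,0\}$ on the cross terms. Alternatively, approximate by bounded truncations $X\wedge n\vee(-n)$, which preserve the quantile ordering, and pass to the limit by monotone or dominated convergence. The hypothesis that the integrals are well-defined is what justifies that limit.

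\textbf{Step 3: the minimum.} Apply the maximum result to the pair $(X,-Y)$. The quantile of $-Y$ is $-Q_\mu(1-p)$ almost everywhere, so
\[
\min_{Y\sim\mu}\mathbb{E}[XY]=-\max_{Y\sim\mu}\mathbb{E}[X(-Y)]=\int_0^1Q_X(p)Q_\mu(1-p)\ddp ,
\]
and this value is attained at $Y_2$.

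\textbf{Main obstacle.} I expect the difficulty to lie in the integrability bookkeeping of Step 2 for unbounded, sign-changing variables, where the integrals are only assumed well-defined. The core inequality itself is classical and can be cited from \cite{DDGKV2002a}. For the applications in this paper $\pr$ is bounded, which keeps the truncation argument straightforward.
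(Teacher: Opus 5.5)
The paper gives no proof of this lemma. It quotes it as the classical rearrangement inequality with a pointer to \cite{DDGKV2002a}, and the uniform $U$ with $X=Q_X(U)$ is taken from \cite{X14} in the proof of Lemma~\ref{Jproperties}. Your proposal is a correct, self-contained proof along standard lines, so it is a genuinely different route from the paper's citation:
\begin{itemize}
\item The randomized probability integral transform gives attainment by $Y_1=Q_\mu(U)$ and $Y_2=Q_\mu(1-U)$.
\item Hoeffding's layer-cake identity with the upper Fr\'echet bound gives the inequality. Your nested-upper-sets observation works for \emph{any} comonotone copy $Y_1$, which covers the reading of the lemma in which $Y_1$ is arbitrary.
\item The minimum follows from $Q_{-Y}(p)=-Q_\mu(1-p)$ for a.e.\ $p$.
\end{itemize}
What this buys over the citation is that two points the paper leaves implicit become explicit. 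First, the space must carry randomness independent of $X$; otherwise comonotone or anti-comonotone copies of $\mu$ need not exist at all. In the paper this role is played by $\BF_T\subsetneqq\mathbb{F}$ together with \cite{X14}. Second, ``provided the integrals are well-defined'' gets a precise meaning.

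For that second point, commit to your positive/negative-part split rather than the truncation route. As written (``justified by well-definedness''), the truncation route is not yet an argument, because monotone or dominated convergence does not apply directly to products of truncated sign-changing variables. The split closes cleanly. Under the coupling $(Q_X(U),Q_\mu(U))$, the upper Fr\'echet bound is attained for $(X^+,Y^+)$ and $(X^-,Y^-)$, and the lower one for $(X^+,Y^-)$ and $(X^-,Y^+)$. Hence for every $Y\sim\mu$ you get $\BE{(XY)^+}\le\BE{(XY_1)^+}$ and $\BE{(XY)^-}\ge\BE{(XY_1)^-}$. When $\int_0^1(Q_XQ_\mu)^+\ddp<\infty$, this shows at once that $\BE{XY}$ is well-defined and at most $\BE{XY_1}$.

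The paper only ever applies the lemma to the positive variables $e^{-\gamma X}$ and $e^{-\gamma Y}$. For its purposes, the nonnegative layer-cake argument therefore suffices, using the upper Fr\'echet bound for the maximum and the lower one for the minimum, with all quantities in $[0,\infty]$. This also avoids routing the minimum through $(X,-Y)$, which would reintroduce sign changes.

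One notational point: do not write $F_X(X^-)$ for the left limit, since you later use $X^-$ for the negative part.
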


By this result, we can determine both $\displaystyle \inf_{Y\sim \pr}\mathbb{E}[u(X+Y)]$ and $\displaystyle\sup_{Y\sim \pr}\mathbb{E}[u(X+Y)]$, taking the advantage of that $u$ is a weighted exponential utility.

\begin{lemma}\label{Jproperties}
We have
\begin{align}\label{Jexp}
J_{\al}(X) = \int_0^1 V(Q_X(p), p) \ddp, 
\end{align}
where
\begin{align}\label{Vdef}
V(x, p) :=(1-\al) u\big(x+\qpr(p)\big)+\al u\big(x+\qpr(1-p)\big), 
~~x\in\R, ~p\in(0, 1).
\end{align}
Also, the functional $J_{\al}$ is law-invariant in the sense that $J_{\al}(X)=J_{\al}(Y)$ if $X\sim Y$, and strictly increasing in the sense that
$J_{\al}(X_1)> J_{\al}(X_2)$
if $X_1\geq X_2$, $\BP{X_1>X_2}>0$. In particular, \[J_{\al}(X+\ep)> J_{\al}(X)\]
for any constant $\ep>0$. 
\end{lemma}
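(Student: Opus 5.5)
We would exploit the exponential structure of $u$ in \eqref{WEU}. Fix $\gamma$ in the support of $F$ and write $e^{-\gamma(X+Y)}=e^{-\gamma X}e^{-\gamma Y}$. Then \citelem{maxminlemma} applies to the random variable $A:=e^{-\gamma X}$ and the distribution of $B:=e^{-\gamma Y}$ with $Y\sim\pr$. Both $Y\mapsto e^{-\gamma Y}$ and $X\mapsto e^{-\gamma X}$ are strictly decreasing, so $Q_A(p)=e^{-\gamma Q_X(1-p)}$ and $Q_B(p)=e^{-\gamma \qpr(1-p)}$, up to a null set of $p$ where the right-continuous version differs; this does not affect the integrals.

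\textbf{Uniform extremal couplings.} The key observation is that the extremal couplings do not depend on $\gamma$. Take $U$ with $X=Q_X(U)$ a.s., which exists after possibly enlarging via randomness outside $\BF_T$; this is allowed because $\BF_T\subsetneqq\mathbb{F}$, and the paper's implicit assumption is that $\mathbb{F}$ is rich enough. Set $Y_1:=\qpr(U)$ (comonotonic with $X$) and $Y_2:=\qpr(1-U)$ (anti-comonotonic with $X$).

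\textbf{Lower bound.} For every $\gamma$, \eqref{max} and \eqref{min} give
\[
\BE{e^{-\gamma(X+Y)}}\le \BE{e^{-\gamma(X+Y_1)}}=\int_0^1 e^{-\gamma(Q_X(p)+\qpr(p))}\ddp
\]
for all $Y\sim\pr$. The comonotonic pair $(A,B)$ corresponds to $X$ and $Y$ being comonotonic, since both maps are decreasing. Integrating against $\dd F(\gamma)$ (Tonelli, all terms nonnegative) yields
\[
\inf_{Y\sim\pr}\BE{u(X+Y)}=\BE{u(X+Y_1)}=\int_0^1 u(Q_X(p)+\qpr(p))\ddp,
\]
and the infimum is attained.

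\textbf{Upper bound.} Symmetrically, the anti-comonotonic coupling gives
\[
\sup_{Y\sim\pr}\BE{u(X+Y)}=\int_0^1u(Q_X(p)+\qpr(1-p))\ddp.
\]
Combining the two with weights $1-\al$ and $\al$ gives \eqref{Jexp}.

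\textbf{Main obstacle.} The delicate step is handling integrability and the "well-defined" proviso of \citelem{maxminlemma}. Since $\pr$ is bounded, $e^{-\gamma Y}$ is bounded. Since $X$ may be unbounded below in general, $\BE{e^{-\gamma X}}$ may be $+\infty$. In that case both sides of the identity equal $-\infty$: boundedness of $\pr$ and the compact support of $F$ give $\BE{u(X+Y)}\ge e^{\gamma_{\max}\|\pr\|_\infty}\,\BE{u(X)}$-type comparisons. So the identity still holds in the extended sense, and we would check this case separately. For $X\ge0$, as in our application, everything is bounded.

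\textbf{Law invariance and monotonicity.} Law invariance is immediate from \eqref{Jexp}, since the right side depends only on $Q_X$. For strict monotonicity, $X_1\ge X_2$ implies $Q_{X_1}\ge Q_{X_2}$ everywhere. If $\BP{X_1>X_2}>0$, then $X_1\not\sim X_2$; otherwise $\BE{\arctan X_1-\arctan X_2}=0$ with a nonnegative, not a.s. zero integrand, a contradiction. Hence $Q_{X_1}>Q_{X_2}$ on a set of $p$ of positive measure, by right-continuity. Because $V(\cdot,p)$ is strictly increasing (as $u$ is), the integral is strictly larger, provided $J_\al(X_2)>-\infty$. If $J_\al(X_2)=-\infty$, we would interpret the claim as requiring finiteness or handle it separately. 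The case $X+\ep$ then follows directly.
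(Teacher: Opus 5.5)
Your proposal is correct and follows essentially the same route as the paper: write $X=Q_X(U)$, apply \citelem{maxminlemma} to $e^{-\gamma X}e^{-\gamma Y}$ with the couplings $\qpr(U)$ and $\qpr(1-U)$ for each $\gamma$, integrate against $\dd F(\gamma)$, and then read off law invariance and strict monotonicity from \eqref{Jexp} via $Q_{X_1}\ge Q_{X_2}$, with strict inequality on an interval by right-continuity. Two of your additions are genuine refinements rather than deviations: you state explicitly that the extremal couplings do not depend on $\gamma$, which the paper uses tacitly when it moves the supremum inside $\dd F$; and you note that strict monotonicity requires $J_\al$ to be finite, which is automatic for $X\ge 0$ (the only case used later) but is glossed over in the paper's proof.
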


\begin{proof} 
By \cite{X14}, there exists a $U \sim \mathrm{Uniform}(0,1)$ such that $X=Q_X(U)$. 
For any constant $\gamma>0$, the two random variables $e^{-\gamma X}$ and $e^{-\gamma Y}$ are comonotonic if and only if $ X$ and $Y$ are comonotonic. Notice $Q_X(U)$ and $\qpr(U)$ are comonotonic and $\qpr(U)\sim \pr$, so by 
\citelem{maxminlemma}, we have 
\begin{align*} 
\sup\limits_{Y\sim \pr}\;\BE{e^{-\gamma(X+Y)}}&=
\sup\limits_{Y\sim \pr}\;\BE{e^{-\gamma(Q_X(U)+Y)}}= \BE{e^{-\gamma\, (Q_X(U)+\qpr(U))}}.
\end{align*}
Hence, by Fubini's theorem, 
\begin{align*}
\inf\limits_{Y\sim \pr}\;\BE{u(X+Y)}
&=-\int_{0}^{\infty} \sup\limits_{Y\sim \pr}\;\BE{e^{-\gamma(X+Y)}}\dd F(\gamma) \\
&= -\int_{0}^{\infty}\BE{e^{-\gamma(Q_X(U)+\qpr(U))}}\dd F(\gamma)\\
&=\BE{ -\int_{0}^{\infty}e^{-\gamma(Q_X(U)+\qpr(U))}\dd F(\gamma)}\\
&=\BE{u\big(Q_X(U)+\qpr(U)\big)}\\
&=\int_0^1 u\big(Q_X(p)+ \qpr(p)\big) \ddp.
\end{align*} 
Similarly, we can prove 
\begin{align*}
\sup\limits_{Y\sim \pr}\BE{u(X+Y)}&=\int_0^1 u\big(Q_X(p)+ \qpr(1-p)\big)\ddp.
\end{align*} 
Recalling the definition of $J_\al(X)$ in \eqref{Jdef}, we conclude 
\[
J_\al(X) = \int_0^1 \left[ (1-\al) u\big(Q_X(p)+ \qpr(p)\big) + \al u\big(Q_X(p)+ \qpr(1-p)\big) \right] \ddp, 
\]
namely, \eqref{Jexp} holds. 
This evidently implies that $J_{\al}$ is a law-invariant functional.

Suppose $X_1 \geq X_2$ and $\mathbb{P}(X_1 > X_2) > 0$. 
Then by definition one can see 
$Q_{X_1}(p) \geq Q_{X_2}(p)$ for all $p \in (0, 1)$, and there exists an interval $(a, b) \subset (0, 1)$ such that $Q_{X_1}(p) > Q_{X_2}(p)$ for $p\in(a,b)$, thanks to the right-continuity of quantiles.
Because $u$ is strictly increasing and 
$\al\in[0,1]$, we have, for all $p\in(0,1)$, 
\begin{align*}
V(Q_{X_1}(p), p) &= \al u\big(Q_{X_1}(p)+\qpr(1-p)\big)+ (1-\al)u\big(Q_{X_1}(p)+\qpr(p)\big) \\
&\geq \al u\big(Q_{X_2}(p)+\qpr(1-p)\big)+ (1-\al)u\big(Q_{X_2}(p)+\qpr(p)\big) \\
&=V\big(Q_{X_2}(p), p\big).
\end{align*}
Notice the above inequality is strict for $p\in(a,b)$.
Therefore, recalling that $V< 0$, we have that 
\begin{align*}%\label{increase}
J_{\al}(X_1) = \int_{0}^{1} V\big(Q_{X_1}(p), p\big) \ddp > \int_{0}^{1} V\big(Q_{X_2}(p), p\big) \ddp = J_{\al}(X_2).
\end{align*} 
The proof is complete. 
\end{proof}

Define a functional on quantiles as follows:
\begin{align*}
\calJ_{\al}(Q) 
:=&~(1-\al) \int_0^1 u\big(Q(p)+\qpr(p)\big) \ddp+\al \int_0^1 u\big(Q(p)+\qpr(1-p)\big) \ddp\\
=&~\int_0^1V(Q(p), p) \ddp.
\end{align*}
The above lemma shows $J_\al(X)=\calJ_{\al}(Q_X)$.

\begin{lemma}[Monotonicty and Concavity of $\calJ_\al$]\label{lemma:J_concave_Q} 
The functional $\calJ_\al$ is strictly increasing and strictly concave.
\end{lemma}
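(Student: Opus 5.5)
The argument is pointwise. Both properties of $\calJ_\al$ will be read off from the integrand $V(\cdot,p)$ in \eqref{Vdef} and then integrated over $p\in(0,1)$. Throughout, $\calJ_\al$ is regarded as a functional on the convex set of quantiles (RCI functions) $Q$ for which the integrals are well defined. For such $Q$ the integrals lie in $[-\infty,0)$, since $u<0$.

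\textbf{Monotonicity.} Let $Q_1\ge Q_2$ be quantiles with $Q_1\neq Q_2$, where $\neq$ means they differ as functions.
\begin{itemize}
\item Since $u$ is strictly increasing and $\al\in[0,1]$, we get $V(Q_1(p),p)\ge V(Q_2(p),p)$ for every $p$.
\item Both functions are right-continuous. So if $Q_1(p_0)>Q_2(p_0)$ at some point, then $Q_1>Q_2$ on a whole interval $[p_0,p_0+\delta)$.
\item On that interval the inequality between the integrands is strict. Integrating gives $\calJ_\al(Q_1)>\calJ_\al(Q_2)$, whenever $\calJ_\al(Q_2)>-\infty$.
\end{itemize}
This repeats the argument in \citelem{Jproperties}, now phrased directly on quantiles.

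\textbf{Concavity.} Let $Q_1\neq Q_2$ be quantiles and $\lambda\in(0,1)$. The combination $Q_\lambda:=\lambda Q_1+(1-\lambda)Q_2$ is again non-decreasing and right-continuous, so it is a quantile; this is what makes the domain convex.
\begin{itemize}
\item Since $u$ is strictly concave, each of the maps $x\mapsto u(x+\qpr(p))$ and $x\mapsto u(x+\qpr(1-p))$ is strictly concave.
\item Their convex combination with weights $1-\al$ and $\al$ is therefore strictly concave in $x$, for every $p$ and every $\al\in[0,1]$, because at least one weight is positive.
\item Hence $V(Q_\lambda(p),p)\ge \lambda V(Q_1(p),p)+(1-\lambda)V(Q_2(p),p)$, with strict inequality wherever $Q_1(p)\neq Q_2(p)$.
\item As before, right-continuity makes the set $\{Q_1\neq Q_2\}$ contain an interval of positive length. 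Integrating then yields $\calJ_\al(Q_\lambda)>\lambda\calJ_\al(Q_1)+(1-\lambda)\calJ_\al(Q_2)$.
\end{itemize}

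\textbf{Main obstacle: integrability.} The functions $u(Q(p)+\cdot)$ might fail to be integrable, so that $\calJ_\al(Q)=-\infty$ for some quantiles. The integrands are bounded above by $0$, so every integral is well defined in $[-\infty,0)$, and the non-strict inequalities survive integration with this convention.

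For the strict inequalities, I would restrict to quantiles with $\calJ_\al(Q_i)>-\infty$. This is the relevant class for the optimization problem, and the feasible quantiles have $Q\ge 0$. On that class, $u(Q+\qpr)$ is bounded below by $u(\min\qpr)>-\infty$, because $\qpr$ is bounded. So all integrals are finite, and strictness follows from integrating a strictly positive difference over a set of positive measure.

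I would state this finiteness caveat explicitly in the proof. It is the only delicate point; everything else is a pointwise consequence of the properties of $u$ listed after \eqref{WEU}.
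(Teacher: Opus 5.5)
Your proposal is correct and takes the same route as the paper. The paper's proof is the one-line observation that $V(\cdot,p)$ is strictly increasing and strictly concave in $x$. Your argument adds the details that line leaves implicit: right-continuity to upgrade pointwise strictness to strictness on an interval (as in the proof of Lemma~\ref{Jproperties}), and the finiteness of $\mathcal{J}_\alpha$ on nonnegative quantiles via the lower bound $u(Q_{\vartheta}(0))$.
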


\begin{proof} 
This is an immediate consequence of the fact that $V(x,p)$ defined in \eqref{Vdef} is strictly increasing and strictly concave with respect to $x$.
\end{proof}

Consider the following relaxed optimization problem, for $x>0$:
\begin{eqnarray}\label{static2}
\begin{array}{rl}
\sup\limits_{X\in L^{0}_{\BF_{T}} } &\quad J_{\al}(X), \\ [2mm]
\mathrm{subject\ to} &\quad \BE{\rho X}\leq x, \; X\geq 0.
\end{array}
\end{eqnarray}

By the strict monotonicity of $J_\alpha$, any optimal solution $\barx$ to \eqref{static2}, if it exists, must satisfy the budget constraint with equality: $\mathbb{E}[\rho \barx] = x$. Consequently, $\barx \in \seta_x$ by Lemma \ref{martingalemethod}, and thus $\barx$ also solves the original static problem \eqref{static1}. Moreover, any admissible portfolio replicating $\barx$ is optimal for the dynamic problem \eqref{eut+}. Hence, solving \eqref{static2} suffices for our purposes.

As evident from the representation \eqref{Jexp}, the problem \eqref{eut+} lies outside the scope of classical stochastic control theory (e.g., the stochastic maximum principle or dynamic programming). To overcome this difficulty, we employ a quantile formulation approach, building upon the methodology developed in \cite{X23,X25}.

Let $\setq$ denote the set of all quantiles generated by nonnegative random variables, that is, 
\begin{align*}
\setq :=\big\{Q: (0, 1)\to \R \;\big|\; &\text{$Q$ is the quantile for some}\\
&\text{nonnegative random variable $X\in L^{0}_{\BF_{T}}$}\big\}.
\end{align*}
It is easy to see
\begin{align*}
\setq&=\big\{Q: (0, 1)\to [0, \infty) \;\big|\; \text{$Q$ is RCI}\big\},
\end{align*}
which clearly implies that $\setq$ is a convex set.

The preceding arguments show that solving \eqref{static2} reduces to the following problem:
\begin{equation}\label{static3}
\begin{aligned}
\sup\limits_{X\in L^{0}_{\BF_{T}} } &\quad J_{\al}(X)=\int_0^1 V(Q_X(p), p) \ddp=\calJ_{\al}(Q_X). \\ 
\mathrm{subject\ to} &\quad \BE{\rho X}= x, \; X\geq 0.
\end{aligned}
\end{equation}

According to \cite{X14}, any optimal solution $\barx$ to \eqref{static3} and the pricing kernel $\rho$ must be anticomonotonic. This crucial observation allows us to reformulate \eqref{static3} as a quantile optimization problem:
\begin{equation}\label{static3a}
\begin{aligned}
\sup_{Q\in\setq} & \quad \calJ_\al(Q)=\int_0^1 V(Q(p), p)\ddp \\
\text{subject to} & \quad \int_0^1 Q(p) Q_\rho(1-p)\ddp = x.
\end{aligned}
\end{equation}
Specifically, a quantile $\barq\in \mathcal{Q}^+$ solves \eqref{static3a} if and only if $\barx$ solves \eqref{static3} and $\barq = Q_{\barx}$.

Since $\mathcal{J}_\alpha(Q) \leq 0$ for all admissible $Q$, the problem \eqref{static3a} is well-posed with a finite optimal value. Moreover, $\mathcal{J}_\alpha$ is a strictly concave functional and the feasible set of quantiles is convex; consequently, \eqref{static3a} is a concave optimization problem admitting at most one optimal solution.

To solve \eqref{static3a}, we employ the Lagrange multiplier method. For any Lagrange multiplier $\lambda > 0$, consider the unconstrained problem:
\begin{equation}\label{static4}
\sup_{Q\in \setq} \calL(Q),
\end{equation}
where 
the Lagrangian functional $\calL$ is defined as 
\begin{align*}
\calL(Q) :=&~\calJ_\al(Q)+\lam\left(x-\int_0^1 Q(p) Q_\rho(1-p)\ddp\right) \\
=&~ \int_0^1 \big[V(Q(p), p)-\lam Q(p) Q_\rho(1-p)\big] \ddp+\lam x.
\end{align*} 
Since \( \calJ_\al\) is a strictly concave functional, so is \( \calL \) and the problem \eqref{static4} admits at most one optimal solution. 

\subsection{Optimal Solutions to \eqref{static4} and \eqref{static2}}\label{sec:solution}

The concavity of \eqref{static4} and convexity of its feasible domain render the problem well-suited for variational methods. Using calculus of variations, we obtain the following necessary and sufficient optimality conditions, which uniquely characterize the optimal solution whenever it exists.
\begin{prop}\label{variation}
Suppose $\barq \in \setq$. Then $\barq$ is an optimal solution to \eqref{static4} if and only if it satisfies:
\begin{align}\label{varcond}
\int_{0}^{1} \left[ \frac{\partial V}{\partial x}(\barq(p), p)-\lam Q_\rho(1-p) \right] (Q(p)-\barq(p)) \ddp \leq 0, ~~ \forall\; Q \in \setq.
\end{align}
\end{prop}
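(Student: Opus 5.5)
We prove both directions by a first-variation argument, exploiting the convexity of $\setq$ and the concavity of $x\mapsto V(x,p)$. Set
\[
\ell(x,p):=V(x,p)-\lam x Q_\rho(1-p),
\]
so that $\calL(Q)=\int_0^1 \ell(Q(p),p)\ddp+\lam x$. For each $p$, the map $x\mapsto \ell(x,p)$ is $C^1$ and strictly concave, since $u$ is smooth and strictly concave and $\qpr$ is bounded.

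\emph{Sufficiency.} Assume \eqref{varcond} holds. Concavity gives, for every $p$ and every $Q\in\setq$, the tangent-line inequality
\[
\ell(Q(p),p)\le \ell(\barq(p),p)+\Big[\tfrac{\partial V}{\partial x}(\barq(p),p)-\lam Q_\rho(1-p)\Big](Q(p)-\barq(p)).
\]
Integrating over $(0,1)$ and invoking \eqref{varcond} yields $\calL(Q)\le \calL(\barq)$. The one technical point is that the integrals must make sense. If $\calL(Q)=-\infty$ there is nothing to prove. Otherwise, the inequality shows that the positive part of the integrand in \eqref{varcond} is controlled, so the integral is well defined, possibly equal to $-\infty$, and the comparison of integrals is legitimate.

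\emph{Necessity.} Let $\barq$ be optimal. Then $\calL(\barq)>-\infty$, since $\calL(\text{const})$ is finite, so in particular $\int_0^1 \barq(p)Q_\rho(1-p)\ddp<\infty$. Fix $Q\in\setq$. If $\int_0^1 Q(p)Q_\rho(1-p)\ddp=\infty$, then $\calL(Q)=-\infty$ because $u\le 0$. In that case I would argue the left side of \eqref{varcond} is $-\infty$, using the fact that $\frac{\partial V}{\partial x}$ is bounded on $\{x\ge 0\}$ (as $u'(x+c)\le u'(\inf c)$ with $\qpr$ bounded). Otherwise, set $Q_\ep:=\barq+\ep(Q-\barq)\in\setq$ for $\ep\in(0,1]$, which is admissible by convexity of $\setq$. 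Optimality gives $0\ge \frac{1}{\ep}\big(\calL(Q_\ep)-\calL(\barq)\big)$. By concavity of $\ell(\cdot,p)$, the difference quotient
\[
\frac{\ell(Q_\ep(p),p)-\ell(\barq(p),p)}{\ep}
\]
is nondecreasing as $\ep\downarrow 0$ and converges pointwise to $\big[\frac{\partial V}{\partial x}(\barq,p)-\lam Q_\rho(1-p)\big](Q-\barq)$. At $\ep=1$ it equals $\ell(Q,p)-\ell(\barq,p)$, which is integrable from below because both $\calL(Q)$ and $\calL(\barq)$ are finite. The monotone convergence theorem, applied to increasing sequences bounded below by an integrable function, then passes the limit inside the integral and yields \eqref{varcond}.

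The main obstacle is integrability, since $Q_\rho(1-p)$ is unbounded near $p=0$ and $Q$ may be unbounded near $p=1$. The monotone-difference-quotient device above is how I intend to avoid requiring a dominating function. The case $\calL(Q)=-\infty$ is handled separately through the boundedness of $\frac{\partial V}{\partial x}$ on nonnegative arguments, which makes the $-\lam\int Q\,Q_\rho(1-p)$ term dominate.
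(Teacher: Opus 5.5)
Your argument for quantiles of finite cost, $\int_0^1 Q(p)Q_\rho(1-p)\ddp<\infty$, is correct and somewhat cleaner than the paper's. For sufficiency you integrate the pointwise tangent-line inequality, where the paper argues by contradiction and uses dominated convergence. For necessity, your monotone difference quotients, bounded below by $\ell(Q,p)-\ell(\barq,p)$, replace the paper's Fatou step, for which the paper never exhibits a lower bound.

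The gap is the infinite-cost case in the necessity part. Your claim that the left side of \eqref{varcond} is then $-\infty$ does not follow from the boundedness of $\frac{\partial V}{\partial x}$, and it is false in general. Boundedness only gives $\int_0^1\frac{\partial V}{\partial x}(\barq(p),p)Q(p)\ddp\le u'(\qpr(0))\int_0^1 Q\,\ddp$. But infinite cost forces $\int_{1/2}^1 Q\,\ddp=\infty$: on $(0,1/2]$ one has $Q\le Q(1/2)$ and $\int_0^1 Q_\rho(1-p)\ddp=\BE{\rho}<\infty$, while $Q_\rho(1-p)\le Q_\rho(1/2)$ on $[1/2,1)$. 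Since $Q_\rho(1-p)\to0$ as $p\to1$, the budget term does not dominate there, and your split produces $\infty-\infty$.

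Concretely, suppose $\frac{\partial V}{\partial x}(\barq(p),p)=\lam Q_\rho(1-p)$ on some $(1-\delta,1)$. This is the generic situation: by the analysis after the proposition, it holds on every interval where $\barq$ is strictly increasing. Then the integrand of \eqref{varcond} vanishes on $(1-\delta,1)$, $Q$ is bounded on $(0,1-\delta]$, and the left side is finite for every $Q\in\setq$, whatever its cost.

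The repair is truncation rather than domination. Put $\psi(p):=\frac{\partial V}{\partial x}(\barq(p),p)-\lam Q_\rho(1-p)$.
\begin{itemize}
\item $\int_0^1|\psi|\,\barq\,\ddp<\infty$: by concavity $0\le\frac{\partial V}{\partial x}(\barq,p)\,\barq\le V(\barq,p)-V(0,p)$, which is bounded, and $\barq$ has finite cost.
\item $Q\wedge n\in\setq$ has cost at most $n\BE{\rho}$, so your finite-cost argument gives $\int_0^1\psi\,(Q\wedge n)\ddp\le\int_0^1\psi\,\barq\,\ddp$.
\item Applying monotone convergence separately to $\int_0^1\psi^{\pm}(Q\wedge n)\ddp$ then yields \eqref{varcond} whenever its left side is well defined.
\end{itemize}
Alternatively, state \eqref{varcond} only for finite-cost $Q$. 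This loses nothing: your sufficiency proof discards infinite-cost $Q$ anyway, and every test quantile used later in the paper has finite cost. The paper's own proof does not address this case either; its rearrangement step tacitly treats all the integrals as finite.

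There is also a smaller slip in your sufficiency part. The tangent-line inequality bounds the integrand of \eqref{varcond} from below by $\ell(Q,p)-\ell(\barq,p)$. So it is the negative part, not the positive part, that is controlled, and the integral lies in $(-\infty,+\infty]$. Then \eqref{varcond} forces it to be finite. The same lower bound rules out $\calL(\barq)=-\infty$ when $\calL(Q)$ is finite, and that is what legitimizes integrating the inequality.
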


\begin{proof}
($\Longrightarrow$) Suppose $\barq\in \setq$ is an optimal solution to \eqref{static4}. For any $Q \in \setq$ and $\ep \in (0, 1)$, define the perturbation quantile:
\[
Q_\ep(p) = \ep Q(p)+(1-\ep)\barq(p), \quad p \in (0, 1).
\]
By the convexity of $\setq$, we have $Q_\ep \in \setq$. The optimality of $\barq$ implies:
\[
\int_0^1 V(Q_\ep(p), p)\ddp-\lam \int_0^1 Q_\ep(p) Q_\rho(1-p)\ddp \leq \int_0^1 V(\barq(p), p)\ddp-\lam \int_0^1 \barq(p) Q_\rho(1-p)\ddp. 
\]
After rearrangement, dividing both sides by $\ep > 0$ and taking $\ep \to 0^+$, we obtain from Fatou's lemma that 
\begin{align*}
0 &\geq \liminf_{\ep \to 0^+} \frac{1}{\ep} \left[ \int_0^1 V(Q_\ep(p), p)-V(\barq(p), p) \ddp-\lam \int_0^1 (Q_\ep(p)-\barq(p)\big) Q_\rho(1-p)\ddp \right] \\
&\geq \int_{0}^{1} \liminf_{\ep \to 0^+} \frac{V(Q_\ep(p), p)-V(\barq(p), p)}{\ep} \ddp-\lam \int_{0}^{1} (Q(p)-\barq(p)) Q_\rho(1-p)\ddp\\
&=\int_{0}^{1} \left[ \frac{\partial V}{\partial x}(\barq(p), p)-\lam Q_\rho(1-p) \right] (Q(p)-\barq(p)) \ddp,
\end{align*} 
giving the desired inequality \eqref{varcond}. 

($\Longleftarrow$) Conversely, assume $\barq\in\setq$ satisfies \eqref{varcond} but is not optimal to \eqref{static4}. Then there exist a $ Q_1 \in \setq$ and a constant $c > 0$ such that 
\[
\calL(Q_1) > \calL(\barq)+c.
\]
For $\ep \in (0, 1)$, let $Q_\ep(p) = \ep Q_1(p)+(1-\ep)\barq(p)$. The concavity of the functional $\calL$ gives 
\[
\calL(Q_\ep) \geq \ep\calL(Q_1)+(1-\ep)\calL(\barq) \geq \calL(\barq)+c\ep.
\]
It hence follows 
\[
\liminf_{\ep \to 0^+} \frac{1}{\ep} \left[ \calL(Q_\ep)-\calL(\barq) \right] \geq c > 0.
\]
However, the dominated convergence theorem together with \eqref{varcond} yields:
\begin{align*}
\liminf_{\ep \to 0^+} \frac{1}{\ep} \left[ \calL(Q_\ep)-\calL(\barq) \right] &= \frac{\dd}{\dd\ep} \calL(Q_\ep) \Big|_{\ep=0}\\
&=\int_{0}^{1} \left[ \frac{\partial V}{\partial x}(\barq(p), p)-\lam Q_\rho(1-p) \right] (Q(p)-\barq(p)) \ddp\\
&\leq 0.
\end{align*}
This contradiction establishes optimality of $\barq$. The proof is complete. 
\end{proof}
\par
Although the condition \eqref{varcond} characterizes the optimal solution to the problem \eqref{static4}, 
it is hard to use it to find the optimal solution, because one would have to compare the candidate $\barq$ with all the other quantiles in $\setq$, a task as difficult as solving \eqref{static4}.

Before going further, we first show some properties of $V$. 
\begin{lemma}\label{onV}
For each $p\in(0,1)$, 
the function $x\to V(x,p)$ belongs to $C^{\infty}(\R)$, and satisfies $\frac{\partial V}{\partial x}>0$, $\frac{\partial^{2}V}{\partial x^{2}}<0$, and
\begin{gather} 
\lim_{x\to+\infty} \frac{\partial V}{\partial x}(x,p)=0,~~
\lim_{x\to-\infty} \frac{\partial V}{\partial x}(x,p)=+\infty. \label{ineq22}
\end{gather}
Moreover, for $(x,p)\in [0,\infty)\times (0,1)$, we have
\begin{gather}\label{onV1}
u(\qpr(0))\leq V(x,p)< 0, \\
0 < \frac{\partial V}{\partial x}(x,p) \leq u'(\qpr(0)),\label{ineq1}\\
u''(\qpr(0))\leq \frac{\partial^{2}V}{\partial x^{2}}(x,p)<0,\label{ineq2}
\end{gather}
so that $V$, $\frac{\partial V}{\partial x}$ and $\frac{\partial^{2}V}{\partial x^{2}}$ are all bounded on $[0,\infty)\times (0,1)$. 
Moreover, the inequalities in \eqref{onV1}-\eqref{ineq2} are strict when $x>0$. 
\end{lemma}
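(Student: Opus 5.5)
The argument reduces everything to properties of $u$ and its derivatives, using the explicit formula \eqref{uk1} together with the fact that $\qpr$ is non-decreasing and bounded. The minimum of $\qpr$ on $[0,1]$ is $\qpr(0)$.

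\textbf{Smoothness and sign of derivatives.} For fixed $p$, the function $x\mapsto V(x,p)$ is a convex combination (weights $1-\al$ and $\al$) of the two shifted utilities $x\mapsto u(x+\qpr(p))$ and $x\mapsto u(x+\qpr(1-p))$. Since $u\in C^\infty(\R)$, so is $V(\cdot,p)$, and we may differentiate termwise:
\[
\frac{\partial^k V}{\partial x^k}(x,p)=(1-\al)u^{(k)}\big(x+\qpr(p)\big)+\al u^{(k)}\big(x+\qpr(1-p)\big).
\]
By \eqref{uk1}, $u'>0$ and $u''<0$ everywhere. The weights are nonnegative and sum to one, so $\frac{\partial V}{\partial x}>0$ and $\frac{\partial^2 V}{\partial x^2}<0$.

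\textbf{Limits \eqref{ineq22}.} Because $\qpr$ is bounded, both shifts $\qpr(p)$ and $\qpr(1-p)$ are finite constants for fixed $p$. The limits $u'(x)\to 0$ as $x\to+\infty$ and $u'(x)\to+\infty$ as $x\to-\infty$ therefore transfer to each term. Since at least one weight is positive, they transfer to the combination as well.

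\textbf{Bounds \eqref{onV1}--\eqref{ineq2} on $[0,\infty)\times(0,1)$.}
\begin{itemize}
\item For $x\ge0$ both arguments satisfy $x+\qpr(\cdot)\ge \qpr(0)$.
\item $u$ is increasing and negative, since $u=-\int e^{-\gamma x}\,\dd F<0$. Hence $u(\qpr(0))\le u(x+\qpr(\cdot))<0$, and the convex combination gives \eqref{onV1}.
\item $u'$ is positive and strictly decreasing (odd $k$), so $0<u'(x+\qpr(\cdot))\le u'(\qpr(0))$. This gives \eqref{ineq1}.
\item $u''$ is negative and strictly increasing (even $k$), so $u''(\qpr(0))\le u''(x+\qpr(\cdot))<0$. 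This gives \eqref{ineq2}.
\end{itemize}
Boundedness of $V$, $\frac{\partial V}{\partial x}$ and $\frac{\partial^2 V}{\partial x^2}$ on $[0,\infty)\times(0,1)$ follows immediately.

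\textbf{Strictness for $x>0$.} The only point needing care is strictness of the lower-end inequalities when $x>0$. In that case $x+\qpr(\cdot)\ge x+\qpr(0)>\qpr(0)$, and strict monotonicity of $u$, $u'$ and $u''$ makes each term's inequality strict. Both terms are then strict, so the combination is strict whatever the value of $\al\in[0,1]$. The upper inequalities ($<0$, $>0$) are already strict. No step here is a genuine obstacle.
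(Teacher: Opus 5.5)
Your proof is correct and follows essentially the same route as the paper: you differentiate the convex combination termwise via \eqref{uk1}, transfer the limits of $u'$ through the finite shifts, and bound $x+\qpr(\cdot)\ge\qpr(0)$ (strictly when $x>0$) using the strict monotonicity of $u$, $u'$ and $u''$. One cosmetic point: the inequalities you call ``lower-end'' are really the ones involving $\qpr(0)$, which for \eqref{ineq1} is the upper bound $\frac{\partial V}{\partial x}\le u'(\qpr(0))$; your strict-monotonicity argument nonetheless handles all of them correctly.
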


\begin{proof}
Since $u< 0$ and thanks to \eqref{uk1}, it holds 
\begin{align*}
V(x,p)&= (1-\alpha)\,u\bigl(x + \qpr(p)\bigr)+ \alpha\,u\bigl(x + \qpr(1-p)\bigr)<0,\\ 
(-1)^{k+1}\frac{\partial^{k} V}{\partial x^{k}}(x,p)&=(-1)^{k+1}\Big[ (1-\alpha)\,u^{(k)}\bigl(x + \qpr(p)\bigr)+ \alpha\,u^{(k)}\bigl(x + \qpr(1-p)\bigr)\Big]>0. 
\end{align*}
The second estimate clearly implies \eqref{ineq22} as $\lim_{x\to-\infty}u'(x)=+\infty$ and $\lim_{x\to+\infty}u'(x)=0$. The signs of $V$, $\frac{\partial V}{\partial x}$ and $\frac{\partial^{2}V}{\partial x^{2}}$ in \eqref{onV1}-\eqref{ineq2} follow. 

Now suppose $x\geq 0$. 
By the monotonicity of quantiles,
\begin{align}\label{ineq1111}
x + \qpr(p)\ge \qpr(p)\ge \qpr(0),\quad x + \qpr(1-p)\ge \qpr(1-p)\ge \qpr(0), 
\end{align}
so by the monotonicity of \(u\), \(u'\), and \(u''\), we get the remainder bounds in \eqref{onV1}-\eqref{ineq2}. 

When $x>0$, the lower bounds in \eqref{ineq1111} are strict, by the strict monotonicity of \(u\), \(u'\), and \(u''\), the estimates in \eqref{onV1}-\eqref{ineq2} become strict. 
The proof is complete. 
\end{proof}
\begin{lemma}\label{lemma:G_decreasing_via_IFT}
There exists a unique function $\vd: \R\times (0,1)\to \R$ such that 
\begin{align}\label{defG1}
\frac{\partial V}{\partial x}\bigl(\vd(x, p), p\bigr)
=\;x,~~x\in\R,~p\in(0,1).
\end{align}
For each fixed $p\in(0,1)$, the function \(x\mapsto \vd(x, p)\) is in $C^{\infty}(\R)$ and strictly decreasing on $\R$.
Also, the function $\frac{\partial \vd}{\partial p}$ is continuous on $\R\times (0,1)$.
\end{lemma}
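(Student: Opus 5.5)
The plan is to construct $\vd(\cdot,p)$ as the inverse function of the strictly decreasing map $y\mapsto \frac{\partial V}{\partial x}(y,p)$, and then to differentiate the defining identity \eqref{defG1} implicitly in both variables.

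\textbf{Step 1 (the inverse map).} Fix $p\in(0,1)$ and set $g_p(y):=\frac{\partial V}{\partial x}(y,p)$, $y\in\R$. By \citelem{onV}, $g_p\in C^\infty(\R)$ and $g_p'=\frac{\partial^2 V}{\partial x^2}(\cdot,p)<0$. Hence $g_p$ is a strictly decreasing $C^\infty$-diffeomorphism of $\R$ onto its range. By \eqref{ineq22} and the intermediate value theorem, this range is exactly $(0,\infty)$.

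Consequently, \eqref{defG1} has exactly one solution $y=\vd(x,p)$ whenever $x$ lies in the range of $g_p$. Uniqueness comes from strict monotonicity of $g_p$; existence comes from surjectivity onto the range.

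I would state explicitly how the lemma's ``$x\in\R$'' is to be read. Since $\frac{\partial V}{\partial x}>0$ everywhere, \eqref{defG1} cannot hold for $x\le 0$. So the lemma has to be read as saying that $\vd(\cdot,p)=g_p^{-1}$ is a bijection between $(0,\infty)$ and all of $\R$. This is exactly the content needed later, where $\vd$ is evaluated at the positive quantity $\lam Q_\rho(1-p)$. I would point this out rather than silently change the statement. All assertions below concern this inverse function.

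\textbf{Step 2 (smoothness and monotonicity in $x$).} I would apply the inverse function theorem to $g_p$. Since $g_p'\neq 0$ everywhere, the inverse $g_p^{-1}$ is $C^\infty$. Its derivative is
$$\frac{\partial\vd}{\partial x}(x,p)=\Bigl[\frac{\partial^2V}{\partial x^2}\bigl(\vd(x,p),p\bigr)\Bigr]^{-1}<0,$$
which gives smoothness and strict decrease in $x$.

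\textbf{Step 3 (dependence on $p$).} This is the main obstacle. I would apply the implicit function theorem to
$$\Phi(x,y,p):=\frac{\partial V}{\partial x}(y,p)-x .$$
Using \eqref{uk1} and \eqref{Vdef},
$$\frac{\partial^2 V}{\partial x\,\partial p}(y,p)=(1-\al)u''\bigl(y+\qpr(p)\bigr)\qpr'(p)-\al\, u''\bigl(y+\qpr(1-p)\bigr)\qpr'(1-p).$$

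To make the implicit function theorem work, I need $\Phi$ to be $C^1$ jointly in $(x,y,p)$. This relies on the standing regularity of $\qpr$ (Lipschitz and differentiable on $[0,1]$), taking $\qpr'$ to be continuous. I am not sure that assumption, exactly as worded, delivers continuity of $\qpr'$. If it does not, the formula for $\frac{\partial\vd}{\partial p}$ below still holds pointwise, but its continuity on $\R\times(0,1)$ (as the lemma states it) would not follow. I regard this as the place where the argument is most likely to fail.

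Granting this regularity, the mixed partial above is continuous. Then $\Phi$ is $C^1$ and $\partial_y\Phi=\frac{\partial^2V}{\partial x^2}<0$, so the implicit function theorem yields
$$\frac{\partial\vd}{\partial p}(x,p)=-\frac{\frac{\partial^2 V}{\partial x\,\partial p}\bigl(\vd(x,p),p\bigr)}{\frac{\partial^2 V}{\partial x^2}\bigl(\vd(x,p),p\bigr)} .$$

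\textbf{Step 4 (continuity).} Joint continuity of $\vd$ follows from the implicit function theorem in Step 3, because the local implicit solution coincides with the global inverse by uniqueness. Continuity of $\frac{\partial\vd}{\partial p}$ then follows by composing continuous maps. The denominator never vanishes by \citelem{onV}.
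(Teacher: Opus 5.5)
Your proposal is correct and is essentially the paper's argument: the paper applies the implicit function theorem to $\varphi(\xi,x,p)=\frac{\partial V}{\partial x}(\xi,p)-x$, which yields $\frac{\partial \vd}{\partial x}=1/\frac{\partial^2V}{\partial x^2}(\vd,p)<0$ and $\frac{\partial\vd}{\partial p}=-\varphi_p/\varphi_\xi$, just as in your Steps 2--4. Both of your caveats point to real gaps in the paper's own proof, not in yours: since $\frac{\partial V}{\partial x}>0$, $\vd$ can only be defined on $(0,\infty)\times(0,1)$ (this suffices, because $\vd$ is only ever evaluated at $\barh'(p)>0$); and the paper's claim that $\frac{\partial\varphi}{\partial p}$ is continuous tacitly requires $\qpr'$ to be continuous, which ``Lipschitz and differentiable'' does not guarantee even for monotone $\qpr$, so your explicitly conditional version is the accurate statement.
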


\begin{proof}
Set
\[
\varphi(\xi,x, p)
:=\frac{\partial V}{\partial x}(\xi,p)-x.
\] 
Thanks to \citelem{onV}, there is a unique \( \vd(x,p)\) solving
\[
\varphi\bigl(\vd(x,p),\,x,\,p\bigr)=0.
\]
Since $\varphi$ is infinitely differentiable with respect to $\xi$ and $x$, 
by the implicit function theorem, for each fixed $p\in(0,1)$, the function \(x\mapsto \vd(x,p)\) belongs to $C^{\infty}(\R)$. Differentiating the above equation in $x$ yields 
\[
\frac{\partial \vd}{\partial x}
=-\frac{\frac{\partial\varphi}{\partial x}}{\frac{\partial\varphi}{\partial \xi}}
=\frac{1}{\frac{\partial^2V}{\partial x^2}\bigl(\vd(x,p),p\bigr)}<0,
\]
so the function \(x\mapsto \vd(x, p)\) is strictly decreasing on $\R$.

Notice $\frac{\partial\varphi}{\partial p}$ is continuously differentiable on $\R\times(0,1)$, so 
\[
\frac{\partial \vd}{\partial p}
=-\frac{\frac{\partial\varphi}{\partial p}}{\frac{\partial\varphi}{\partial \xi}}
\]
is continuous. This completes the proof.
\end{proof}

We next determine the value of optimal solution $\barq$ near zero which requires our assumption that $\rho$ is unbounded.

Let 
$$\barp=\sup\big\{p\in(0,1): \lam Q_\rho(1-p)\geq u'\left(\qpr(0)\right)\big\}.$$ 
Since $\lam>0$, $u'>0$, $Q_{\rho}(0)=0$ and $Q_{\rho}(1)=+\infty$, we have $0<\barp<1$. 
By the monotonicity and right-continuity of quantiles, we have $\lam Q_\rho(1-p)\geq u'\left(\qpr(0)\right)$ for $p\in(0,\barp]$ and $\lam Q_\rho(1-p)< u'\left(\qpr(0)\right)$ for $p\in(\barp,1)$.

\begin{lemma}\label{initial}
The optimal solution $\barq$ to the problem \eqref{static4}, if exists, must satisfy $\barq(p)=0$ for $p\in(0,\barp]$.
\end{lemma}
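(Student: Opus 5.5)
We argue by contradiction, using the variational inequality of \citeprop{variation} with a perturbation that lowers $\barq$ on the initial interval. On $(0,\barp)$ the marginal benefit is dominated by the marginal cost: for $p\in(0,\barp]$ and any $x>0$, \citelem{onV} gives the strict bound
\[
\frac{\partial V}{\partial x}(x,p)< u'(\qpr(0))\leq \lam Q_\rho(1-p).
\]
Hence the integrand factor $\frac{\partial V}{\partial x}(\barq(p),p)-\lam Q_\rho(1-p)$ is strictly negative wherever $p\in(0,\barp]$ and $\barq(p)>0$. Since $\barq\geq 0$, it then suffices to show that the set $\{p\in(0,\barp]:\barq(p)>0\}$ is empty.

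Suppose $\barq\in\setq$ is optimal but $\barq(p_0)>0$ for some $p_0\in(0,\barp]$. Define
\[
Q(p):=\barq(p)\idd{(\barp,1)}(p)+\barq(\barp)\idd{(0,\barp]}(p)\wedge\barq(p).
\]
A simpler choice is to truncate from below: set $Q:=\barq$ on $(\barp,1)$ and $Q:=0$ on $(0,\barp]$. This $Q$ is nonnegative, nondecreasing (it jumps from $0$ up to $\barq(\barp^+)\ge0$) and right-continuous, since at $\barp$ the value $0$ is assigned on the closed left piece. Right-continuity at $\barp$ needs care, because $Q(\barp)=0$ while $Q(\barp^+)=\barq(\barp)$ may be positive. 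To fix this, take $Q:=0$ on $(0,\barp)$ and $Q:=\barq$ on $[\barp,1)$; this is RCI, hence in $\setq$. Then $Q-\barq=-\barq\idd{(0,\barp)}$, and \eqref{varcond} yields
\[
\int_0^{\barp}\Big[\lam Q_\rho(1-p)-\frac{\partial V}{\partial x}(\barq(p),p)\Big]\barq(p)\ddp\leq 0.
\]
The integrand is nonnegative and strictly positive wherever $\barq(p)>0$. Integrability is not an issue: $\frac{\partial V}{\partial x}$ is bounded, and the term $\lam Q_\rho(1-p)\barq(p)$ is part of the finite Lagrangian; if it were infinite the inequality would fail even more. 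Therefore $\barq=0$ a.e. on $(0,\barp)$. By monotonicity and right-continuity, $\barq=0$ on $(0,\barp)$.

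The remaining and slightly delicate point is the endpoint $p=\barp$ itself, because the perturbation above leaves $\barq(\barp)$ unchanged. My plan is to use a perturbation with a cutoff $q>\barp$: set $Q:=0$ on $(0,q)$ and $Q:=\barq$ on $[q,1)$. The variational inequality then gives
\[
\int_{\barp}^{q}\Big[\lam Q_\rho(1-p)-\frac{\partial V}{\partial x}(\barq(p),p)\Big]\barq(p)\ddp\leq 0.
\]
Suppose $\barq(\barp)=c>0$. By right-continuity, $\barq\geq c/2$ near $\barp^+$. Continuity of $Q_\rho$ gives $\lam Q_\rho(1-p)\to\lam Q_\rho(1-\barp)\geq u'(\qpr(0))$, and by \citelem{onV} with $x\geq c/2$ there is a uniform gap $\frac{\partial V}{\partial x}(x,p)\leq u'(\qpr(0)+c/2)<u'(\qpr(0))$. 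So for $q$ close to $\barp$ the integrand is strictly positive, which is a contradiction. I expect this uniform-gap argument at $\barp$ to be the main obstacle; it uses $Q_\rho\in C^1$ and the strict monotonicity of $u'$. The same local argument also handles the interior case and closes the proof.
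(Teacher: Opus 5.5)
Your proof is correct, and its core is the paper's argument. The paper also plugs $Q=\barq\,\idd{p\geq\barp}$ into \eqref{varcond} and uses the strict bound $\frac{\partial V}{\partial x}(x,p)<u'(\qpr(0))\leq\lam Q_\rho(1-p)$ for $x>0$, $p\in(0,\barp]$, from \citelem{onV}. This forces $\barq=0$ on $(0,\barp)$.

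Where you go beyond the paper is the endpoint $p=\barp$. The paper asserts that, if the claim fails, there is $\ep\in(0,\barp)$ with $\barq>0$ on $[\ep,\barp]$. That fails in exactly the case you isolate: $\barq=0$ on $(0,\barp)$ and $\barq(\barp)>0$. There the paper's perturbation coincides with $\barq$ and gives nothing. The paper only recovers this case implicitly, from the later lemma that $\barq$ is continuous, whose proof does not rely on this one.

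Your cutoff perturbation $\barq\,\idd{[q,1)}$ with $q\downarrow\barp$ closes the case correctly. It rests on the uniform gap $\frac{\partial V}{\partial x}(\barq(p),p)\leq u'(\qpr(0)+c)<u'(\qpr(0))\leq\lam Q_\rho(1-\barp)$ together with continuity of $Q_\rho$ at $1-\barp$.

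Three cosmetic points:
\begin{itemize}
\item Delete the two abandoned choices of $Q$. The first is unnecessary, and the claimed right-continuity of the second is false, as you note yourself.
\item Monotonicity already gives $\barq\geq c$ on $[\barp,1)$, so you do not need right-continuity to get $\barq\geq c/2$.
\item Only continuity of $Q_\rho$ at $1-\barp$ is used, not the full $C^1$ assumption.
\end{itemize}
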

\begin{proof} 
Suppose the claim is wrong. Then,
since $\barq$ is nonnegative and monotone, there exists $\ep\in(0,\barp)$ such that 
$\barq(p)>0$ for $p\in[\ep,\barp]$.
We then get from \citelem{onV} that 
\[
\lam Q_\rho(1-p)-\frac{\partial V}{\partial x}(\barq(p), p)\geq \lam Q_\rho(1-p) -u'\left(\qpr(0)\right)\geq 0, ~~p\in (0,\ep],
\] 
and
\[
\lam Q_\rho(1-p)-\frac{\partial V}{\partial x}(\barq(p), p)> \lam Q_\rho(1-p) -u'\left(\qpr(0)\right)\geq 0, ~~p\in [\ep,\barp].
\]
Set $Q(p)=\barq(p)\idd{p\geq \barp}$. Then $Q \in\setq$ and it follows from \eqref{varcond} that 
\begin{align*}
0\geq &~\int_{0}^{1} \left[ \frac{\partial V}{\partial x}(\barq(p), p)-\lam Q_\rho(1-p) \right] (Q(p)-\barq(p)) \ddp \\
=&~\int_{0}^{\barp} \left[ \lam Q_\rho(1-p)-\frac{\partial V}{\partial x}(\barq(p), p) \right] \barq(p) \ddp\\
\geq &~\int_{\ep}^{\barp} \left[ \lam Q_\rho(1-p)-\frac{\partial V}{\partial x}(\barq(p), p) \right] \barq(p) \ddp.
\end{align*}
But this is impossible since the integrand in the last integral is positive. 
The proof is complete.
\end{proof}

Our next step is to find an equivalent condition to \eqref{varcond} that can be easily verified and utilized. To this end, let $\barq$ be an optimal solution to \eqref{static4} and set
\begin{align}\label{defH}
\barh(p):= -\int_{p}^{1} \left[\frac{\partial V}{\partial x}(\barq(s), s) \right] \ds, ~~p\in[0, 1],
\end{align}
and 
\begin{align}\label{defeta}
\eta(p):= - \int_{p}^{1} Q_\rho(1-s)\ds=- \int_{0}^{1-p} Q_\rho(s)\ds, ~~p\in[0, 1].
\end{align} 
Using $\BE{\rho}<\infty$ and \eqref{ineq1}, one can easily show that $\barh$ is Lipchitz continuous on $[0, 1]$. 
We take $\barh'$ as the right-continues version of the derivative function of $\barh$. Then since the jump points of $\barq$ is at most countable, we have 
\begin{align*} 
\label{eta}
\barh'(p)= \frac{\partial V}{\partial x}(\barq(p), p),~~\mbox{for a.e. $p\in(0, 1)$.}
\end{align*}
Both sides of the above equation are right-continuous, so
\begin{align} \label{H'continuous}
\barh'(p)= \frac{\partial V}{\partial x}(\barq(p), p),~~\mbox{for $p\in(0, 1)$.}
\end{align}
Moreover, the jump points of $\barh'$ and $\barq$ coincide, and $\barh'$ jumps downward when $\barq$ jumps (upward), that is, $\barh'(p-)>\barh'(p)$ if $\barq(p)>\barq(p-)$ at some $p\in(0,1)$. 

In terms of $\barh$, the condition \eqref{varcond} now reads
\begin{align}\label{condition2}
\int_{0}^{1} \big(\barh(p)-\lam\eta(p)\big)'(Q(p)-\barq(p))\ddp\leq 0,~~ \forall\; Q \in \setq.
\end{align}
By taking $Q=2\barq$ and $Q=\frac{1}{2}\barq$ in above, we see the condition \eqref{varcond} is equivalent to the following two conditions
\begin{align} \label{seta-1}
\int_{0}^{1} \big(\barh(p)-\lam\eta(p)\big)'\barq(p)\ddp=0, 
\end{align}
and
\begin{align} \label{seta-2}
\int_{0}^{1} \big(\barh(p)-\lam\eta(p)\big)'Q(p)\ddp\leq 0,~~ \forall\; Q \in \setq.
\end{align}

For any $a\in [0, 1)$, taking $Q(p)= \idd{p\in [a, 1)}\in\setq$ in \eqref{seta-2} yields 
\begin{align*}
\int_{0}^{1}\big(\barh(p)-\lam\eta(p)\big)' \idd{p\in [a, 1)}\ddp=-\big(\barh(a)-\lam\eta(a)\big)\leq 0, 
\end{align*}
on recalling $\barh(1)=\eta(1)=0$.
It hence follows $\barh(p)\geq \lam\eta(p)$ on $[0, 1]$. 

\begin{lemma}
The optimal solution $\barq$ to the problem \eqref{static4}, if exists, must be a continuous increasing function. 
As a consequence, $\barh'$ is continuous.
\end{lemma}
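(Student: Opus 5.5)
We argue by contradiction: suppose $\barq$ has a jump at some $p_0\in(0,1)$, i.e. $\barq(p_0-)<\barq(p_0)$. Lemma \ref{initial} gives $\barq=0$ on $(0,\barp]$, and $\barq$ is right-continuous there, so $p_0>\barp$; continuity at $0$ is automatic since $\barq$ vanishes near $0$. The idea is to show that at a jump the function $G:=\barh-\lam\eta$ cannot have the sign structure required by \eqref{seta-1}--\eqref{seta-2}. We already know $G\geq 0$ on $[0,1]$ and $G(1)=0$.

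\textbf{Step 1 (where $\barq$ increases, $G=0$).} Integrating \eqref{seta-1} by parts in Stieltjes form should give $\int_{(0,1)} G(p)\,\dd\barq(p)=0$, using $G(1)=0$, the boundedness of $G$, and the fact that $\barq$ vanishes near $0$. Some care is needed at $p=1$, where $\barq$ may be unbounded; I would handle this by truncating $\barq$ at level $n$ (the truncation remains in $\setq$) and passing to the limit. Since $G\geq0$ and $\dd\barq\geq 0$, it follows that $G=0$ on the support of the measure $\dd\barq$. In particular $G(p_0)=0$, because $p_0$ is an atom of $\dd\barq$.

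\textbf{Step 2 (local contradiction).} The point $p_0$ is then an interior minimum of the Lipschitz function $G\geq0$, so the one-sided derivatives satisfy $G'(p_0-)\leq 0\leq G'(p_0+)$, whenever these exist. Since $\eta'(p)=Q_\rho(1-p)$ is continuous, we have $G'(p\pm)=\barh'(p\pm)-\lam Q_\rho(1-p)$. By \eqref{H'continuous}, $\barh'$ is right-continuous with a left limit, and it jumps strictly downward at $p_0$ because $\frac{\partial V}{\partial x}$ is strictly decreasing in $x$ (Lemma \ref{onV}). Hence $G'(p_0-)>G'(p_0)=G'(p_0+)$. Combined with $G'(p_0-)\leq 0\leq G'(p_0+)$, this is a contradiction. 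To make the one-sided derivative claim rigorous, I would write
\[
G(p_0)-G(p_0-h)=\int_{p_0-h}^{p_0}G'(s)\,\ds,
\]
which, divided by $h$, tends to $G'(p_0-)$ because $\barh'$ has a left limit and $\eta'$ is continuous. The same argument works on the right. This step is the heart of the proof.

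\textbf{Step 3 (consequence).} Once $\barq$ is continuous, $\barh'(p)=\frac{\partial V}{\partial x}(\barq(p),p)$ is continuous, since $\frac{\partial V}{\partial x}$ is jointly continuous ($u'$ is continuous and $\qpr$ is continuous).

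I expect the main obstacle to be Step 1: justifying the integration by parts that converts \eqref{seta-1} into $\int G\,\dd\barq=0$, in particular the behaviour near $p=1$. If this proves troublesome, an alternative is to test \eqref{seta-2} directly with the perturbation $Q=\barq+\delta\,\idd{[p_0-h,1)}$, and to test \eqref{varcond} with $Q=\barq-\delta\,\idd{[p_0,1)}$. The latter is admissible for small $\delta$ because of the jump: the gap $\barq(p_0)-\barq(p_0-)>0$ keeps the perturbed function nondecreasing, and $\barq(p_0)>0$ keeps it nonnegative. These two choices give $G(p_0-h)\geq0$ and $-G(p_0)\geq0$ respectively, so $G(p_0)=0$ without any integration by parts, and Step 2 then proceeds unchanged.
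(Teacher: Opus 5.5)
Your proof is correct and takes essentially the paper's approach. The paper gets $\barh(a)=\lam\eta(a)$ at a jump point $a$ exactly as in your fallback, testing \eqref{varcond} with $\barq\pm\ep\,\idd{[a,1)}$ (admissible because the jump exceeds $\ep$), and then reaches the same contradiction $\barh'(a)\geq\lam\eta'(a)\geq\barh'(a-)>\barh'(a)$ from $\barh-\lam\eta$ attaining its minimum $0$ at $a$. Your primary Stieltjes route via \eqref{seta-1} is the integration by parts the paper carries out just after this lemma, and your difference-quotient argument rigorously justifies the one-sided derivative step that the paper states without comment.
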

\begin{proof}
Suppose, on the contrary, $\barq(a)-\barq(a-)>\ep>0$ at some point $a\in(0,1)$. 
Let
\begin{align*}
Q_{\pm}(p)=
\begin{cases}
\barq(p)\pm\ep, & p\in[a, 1),\\
\barq(p), & p\in(0,a).
\end{cases} 
\end{align*}
Then $Q_{\pm}\in\setq$. By \eqref{varcond} and $\barh(1)=\eta(1)=0$, we have 
\begin{multline*}
~\int_{0}^{1}\big(\barh(p)-\lam\eta(p)\big)'\big(Q_{\pm}(p)-\barq(p)\big)\ddp\\
=~\pm\ep\int_{a}^{1}\big(\barh(p)-\lam\eta(p)\big)' \ddp=~\mp\ep \big(\barh(a)-\lam\eta(a)\big)\leq 0,
\end{multline*} 
so $\barh(a)=\lam\eta(a)$. This together with $\barh\geq \lam\eta$ leads to $\barh'(a)\geq \lam\eta'(a)\geq \barh'(a-)$. But, since $\barq(a)>\barq(a-)$, we get from \eqref{H'continuous} that $\barh'(a-)>\barh'(a)$, leading to a contradiction. Therefore, $\barq$ is continuous. This together with \eqref{H'continuous} implies the continuity of $\barh'$, completing the proof. 
\end{proof}

Suppose $\barh>\lam\eta$ on some interval $(a, b]\subset (0, 1)$ with $\barq(b)>\barq(a)$. Using that quantiles are increasing, $\barh\geq \lam\eta$, $\barh(1)=\eta(1)=\barq(0)=0$, we obtain from Fubini's theorem that 
\begin{align*}
\int_{0}^{1} \big(\barh(p)-\lam\eta(p)\big)'\barq(p)\ddp =&~\int_{0}^{1}\big(\barh(p)-\lam\eta(p)\big)'\int_{(0,p]}\dd\barq(s)\ddp\\
=&~ \int_{(0, 1)}\int_s^1\big(\barh(p)-\lam\eta(p)\big)'\ddp\dd \barq(s)\\
=&~-\int_{(0, 1)} \big(\barh(s)-\lam\eta(s)\big)\dd \barq(s)\\
\leq &~ -\int_{(a, b]} \big(\barh(s)-\lam\eta(s)\big)\dd \barq(s)<0, 
\end{align*}
contradicting \eqref{seta-1}. This implies that $\barq$ is constant (and consequently $\barq'=0$) on every subinterval of the set $\big\{p\in(0, 1)\;\big|\;\barh(p)>0\big\}$.

Together with \citelem{initial}, we conclude that if $\barq$ is an optimal solution to the problem \eqref{static4}, then it is continuous increasing and satisfies 
\begin{align} \label{ode1a}
\begin{cases}
\dsp \min\Big\{\barq'(p), \barh(p)-\lam\eta(p)\Big\}=0, ~~ 
\dsp \barh'(p)=\frac{\partial V}{\partial x}(\barq(p), p), \; \mbox{for a.e. $p\in(0, 1)$;}\medskip\\
\barh(1)=0,~~\barq(p)=0,~~p\in(0,\barp].
\end{cases}
\end{align}
This is a two-dimensional first-order ODE and solvable numerically.
Based on it, we will present numerical experiments in the following Section \ref{sec:numerical}.

We now present the main theoretical result of this paper.
\begin{thm}[Characterization of Optimal Solution]\label{main1}
A function $\barq$ is the optimal solution to the problem \eqref{static4} if and only if it is a continuous increasing function on $(0, 1)$ and satisfies the ODE system \eqref{ode1a}.
\end{thm}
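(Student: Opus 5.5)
The necessity direction has essentially been established in the discussion preceding the theorem. If $\barq$ is optimal for \eqref{static4}, then the previous lemma shows that $\barq$ is continuous and increasing. \citelem{initial} gives $\barq=0$ on $(0,\barp]$. The function $\barh$ defined in \eqref{defH} satisfies $\barh(1)=0$ together with \eqref{H'continuous}. Testing \eqref{seta-2} with indicator quantiles gives $\barh\ge\lam\eta$, and the Fubini argument shows that $\barq'=0$ wherever $\barh>\lam\eta$; these two facts together give the complementarity condition $\min\{\barq',\barh-\lam\eta\}=0$. In the proof I would therefore only collect these facts, and devote the real work to sufficiency.

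For sufficiency, let $\barq$ be continuous, increasing, and solve \eqref{ode1a} with some companion function $\barh$. Since $\barh'=\frac{\partial V}{\partial x}(\barq,\cdot)$ and $\barh(1)=0$, this $\barh$ coincides with \eqref{defH}. By \citeprop{variation} it suffices to verify \eqref{varcond}, which is equivalent to the pair \eqref{seta-1}–\eqref{seta-2}. Write $D:=\barh-\lam\eta$. By \eqref{ineq1} and $\BE{\rho}<\infty$, $D$ is Lipschitz on $[0,1]$ and $D(1)=0$. The condition $\min\{\barq',D\}=0$ gives $D\ge0$ a.e., hence everywhere by continuity.

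For \eqref{seta-2}, take any $Q\in\setq$ and write $Q(p)=Q(0)+\int_{(0,p]}\dd Q(s)$. Fubini and $D(1)=0$ then give
\[
\int_0^1 D'(p)Q(p)\ddp=-Q(0)D(0)-\int_{(0,1)}D(s)\,\dd Q(s)\le 0 .
\]
This manipulation is legitimate when $Q$ is bounded. For unbounded $Q$, I would apply it to $Q\wedge n$ (which lies in $\setq$) and let $n\to\infty$. The term $\int\frac{\partial V}{\partial x}(\barq,p)\,Q\ddp$ is handled by monotone convergence, using the boundedness of $\frac{\partial V}{\partial x}$. The term $\int Q\,Q_\rho(1-p)\ddp$ is also handled by monotone convergence; if it equals $+\infty$, the left-hand side is $-\infty$ and the inequality holds trivially. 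For \eqref{seta-1}, the same identity applied to $\barq$ with $\barq(0)=0$ gives
\[
\int_0^1 D'\barq\ddp=-\int_{(0,1)}D\,\dd\barq .
\]
This vanishes because $\dd\barq=\barq'\ddp$ and $D\,\barq'=0$ a.e. Finally, \eqref{seta-1} and \eqref{seta-2} together give \eqref{condition2}, which is \eqref{varcond}.

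The main obstacle is the identity $\dd\barq=\barq'\ddp$. The a.e. condition $\min\{\barq',D\}=0$ alone does not exclude a singular (Cantor-type) part of $\dd\barq$ charging the open set $\{D>0\}$. I would resolve this by reading ``solution of the ODE system'' in the Carathéodory sense: $\barq$ is locally absolutely continuous on $(0,1)$, and near $0$ it is identically zero on $(0,\barp]$. Under that reading, $\int_{(0,1)}D\,\dd\barq=\int_0^1 D\,\barq'\ddp=0$. If this reading is considered too strong, my first fallback is to show directly that $\barq$ is constant on each connected component of $\{D>0\}$. On such a component $D$ is strictly positive, so I would attempt an argument by contradiction with the perturbation technique used in the continuity lemma; I am less sure this route closes without additional regularity, so I would state the absolute-continuity convention explicitly in the theorem's proof.
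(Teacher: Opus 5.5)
Your plan follows the paper's proof. Necessity is read off from the preceding analysis, and sufficiency comes from verifying \eqref{seta-1} and \eqref{seta-2} and invoking \citeprop{variation}. Integrating $D'$ against $\dd Q$ directly, plus truncation, is a compressed form of the paper's indicator, step-function and dyadic approximation argument.

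Your diagnosis of the obstacle is correct, and the paper does not address it. Its proof writes $(\barh-\lam\eta)\,\dd\barq=0$ and integrates against $\dd\barq$, so it silently uses $\dd\barq=\barq'\ddp$. Without absolute continuity the ``if'' direction is in fact false. Take $\barq$ singular continuous and zero on $(0,\barp]$. Let it reach a large value on $(\barp,\barp+\delta]$ and tend to $+\infty$ at $1$ so fast that $\frac{\partial V}{\partial x}(\barq(s),s)\le\frac12\lam Q_\rho(1-s)$ on $[\barp+\delta,1)$. For small $\delta$ this gives $D(p)=\int_p^1\big[\lam Q_\rho(1-s)-\frac{\partial V}{\partial x}(\barq(s),s)\big]\ds>0$ on $(0,1)$. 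Since $\barq'=0$ a.e., \eqref{ode1a} holds a.e. Yet $\barq$ is not constant on $\{D>0\}$, so by the paper's own necessity analysis it is not optimal. Drop the fallback: a perturbation argument presupposes the optimality you are trying to prove, and the property it targets fails for such $\barq$.

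The genuine gap is on the necessity side. Once you require solutions of \eqref{ode1a} to be locally absolutely continuous, the ``only if'' direction must produce that regularity. Collecting the preceding facts does not do this: continuity, monotonicity and local constancy on $\{D>0\}$ are all compatible with a Cantor part, and the paper never proves absolute continuity either. A short argument closes the gap:

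Since the optimal $\barq$ is continuous, $D'(p)=\frac{\partial V}{\partial x}(\barq(p),p)-\lam Q_\rho(1-p)$ is continuous, so $D\in C^1(0,1)$. As $D\ge0$, every point of $Z:=\{D=0\}\cap(0,1)$ is a minimum, so $\frac{\partial V}{\partial x}(\barq(p),p)=\lam Q_\rho(1-p)$ on $Z$.

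Fix $[a,b]\subset(0,1)$. The function $\frac{\partial^2V}{\partial x^2}$ is bounded away from $0$ on $[0,\barq(b)]\times(0,1)$, $\qpr$ is Lipschitz and $Q_\rho$ is $C^1$. Hence $|\barq(p_1)-\barq(p_2)|\le L|p_1-p_2|$ for $p_1,p_2\in Z\cap[a,b]$.

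Now take $c<d$ in $[a,b]$. Either $[c,d]\subset\{D>0\}$, in which case $\barq(c)=\barq(d)$. Or let $c',d'$ be the smallest and largest points of $Z\cap[c,d]$; constancy on $[c,c')$ and $(d',d]$ plus continuity give $\barq(d)-\barq(c)=\barq(d')-\barq(c')\le L(d-c)$.

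Thus the optimal $\barq$ is locally Lipschitz, and your convention is consistent in both directions.

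Two minor points remain:
\begin{itemize}
\item $D$ is not Lipschitz on $[0,1]$, since $\eta'(p)=Q_\rho(1-p)\to\infty$ as $p\downarrow0$. It is only absolutely continuous with integrable derivative, which is all your Fubini steps need.
\item For unbounded $\barq$, apply the \eqref{seta-1} identity to $\barq\wedge n$, using $\dd(\barq\wedge n)=\barq'\idd{\barq<n}\ddp$, and let $n\to\infty$. Both sides stay finite because $x\frac{\partial V}{\partial x}(x,p)$ is bounded on $[0,\infty)\times(0,1)$.
\end{itemize}
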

As a consequence, 
since the problem \eqref{static4} admits at most one optimal solution, the ODE system \eqref{ode1a} admits at most one continuous increasing solution. 
\begin{proof} 
($\Longrightarrow$) This was proved by the preceding analysis. 

($\Longleftarrow$) We now suppose $\barq$ is a solution to the ODE system \eqref{ode1a}. Then \eqref{ode1a} implies 
\begin{align*}%\label{ineq4}
\barq'(p)\ge0, ~~ \barh(p)\ge \lam\eta(p), ~~\big(\barh(p)-\lam\eta(p)\big)\dd \barq(p)=0,~\mbox{ for a.e. $p\in(0, 1)$.}
\end{align*}
Since $\barh$ and $\eta$ are continuous functions, the above implies 
\begin{align}\label{ineq4} 
\barh(p)\ge \lam\eta(p)~ \mbox{ for all $p\in[0, 1]$.}
\end{align}
In particular, it follows from $\barh(1)=\eta(1)=\barq(0)=0$ and Fubini's theorem that 
\begin{align} 
\int_{0}^{1} \big(\barh(p)-\lam\eta(p)\big)'\barq(p)\ddp=-\int_{0}^{1} \big(\barh(p)-\lam\eta(p)\big)\dd\barq(p)=0, 
\end{align}
showing that \eqref{seta-1} holds. We next use a monotone argument to prove that \eqref{seta-2} holds as well. 

First, for any $a\in [0, 1)$, we have 
\begin{align*}
\int_{0}^{1}\big(\barh(p)-\lam\eta(p)\big)' \idd{p\in [a, 1)}\ddp=-\big(\barh(a)-\lam\eta(a)\big)\leq 0, 
\end{align*}
on recalling $\barh(1)=\eta(1)=0$ and \eqref{ineq4}. Therefore, the inequality in \eqref{seta-2} holds for all $Q(p)= \idd{p\in [a, 1)}\in \setq$ with $a\in [0, 1)$. 
By linear combination, we see the inequality in \eqref{seta-2} also holds for all non-decreasing step functions $Q \in \setq$.
Finally, for any $Q \in \setq$ and $n>1$, we define a sequence of functions as: 
\begin{align*}
Q_n(p)=\begin{cases}
\frac{j}{2^n} &~\mbox{if $Q(p)\in[\frac{j}{2^n},\frac{j+1}{2^n}) $ for some $0\leq j<n2^n$;}\\
n &~\mbox{if $Q(p)\geq n$.}
\end{cases}
\end{align*}
For each $n>1$, $Q_n$ is a non-decreasing step function in $\setq$, so the inequality in \eqref{seta-2} holds for it, i.e.,
\begin{align*} 
\int_{0}^{1}\big(\barh(p)-\lam\eta(p)\big)' Q_n(p)\ddp &\leq 0.
\end{align*}
Since \[0\leq Q(p)\wedge n-Q_n(p)\leq \frac{1}{2^n},~~p\in(0, 1),\]
it follows 
\begin{align*} 
&\quad\;\int_{0}^{1}\big(\barh(p)-\lam\eta(p)\big)' \big(Q(p)\wedge n\big)\ddp \\
&= \int_{0}^{1}\big(\barh(p)-\lam\eta(p)\big)' Q_n(p)\ddp+\int_{0}^{1}\big(\barh(p)-\lam\eta(p)\big)'\big(Q(p)\wedge n-Q_n(p)\big)\ddp\\
&\leq \int_{0}^{1}\big(\barh(p)-\lam\eta(p)\big)'\big(Q(p)\wedge n-Q_n(p)\big)\ddp\\
&\leq \frac{1}{2^n}\int_{0}^{1} \Big|\big(\barh(p)-\lam\eta(p)\big)'\Big|\ddp,
\end{align*}
i.e., 
\begin{multline*} \qquad
\int_{0}^{1} \frac{\partial V}{\partial x}(\barq(p), p) \big(Q(p)\wedge n\big)\ddp-\lam\int_{0}^{1} Q_\rho(1-p) \big(Q(p)\wedge n\big)\ddp\\
\leq \frac{1}{2^n}\int_{0}^{1} \Big|\big(\barh(p)-\lam\eta(p)\big)'\Big|\ddp.\qquad
\end{multline*}
This estimate together with $\lam Q_\rho>0$ implies 
\begin{multline}\label{ineq3}\qquad
\int_{0}^{1} \frac{\partial V}{\partial x}(\barq(p), p) \big(Q(p)\wedge n\big)\ddp-\lam\int_{0}^{1} Q_\rho(1-p) Q(p)\ddp\\
\leq \frac{1}{2^n} \int_{0}^{1} \Big|\big(\barh(p)-\lam\eta(p)\big)'\Big|\ddp.\qquad
\end{multline}
Thanks to $Q_\rho>0$ and the estimate \eqref{ineq1},
\begin{multline*} 
\int_{0}^{1} \Big|\big(\barh(p)-\lam\eta(p)\big)'\Big|\ddp
\leq ~\int_{0}^{1}\bigg|\frac{\partial V}{\partial x}(\barq(p), p)\bigg|\ddp+\int_{0}^{1}\lam Q_\rho(1-p)\ddp\\
\leq ~u'\left(\qpr(0)\right)+\lam\BE{\rho}<\infty.
\end{multline*}
This, via sending $n\to\infty$ in \eqref{ineq3}, leads to 
\begin{align*} 
\int_{0}^{1} \frac{\partial V}{\partial x}(\barq(p), p) Q(p) \ddp-\lam\int_{0}^{1} Q_\rho(1-p) Q(p)\ddp\leq 0,
\end{align*}
that is, the inequality in \eqref{seta-2} holds for any $Q\in\setq$. 

Since both \eqref{seta-1} and \eqref{seta-2} hold, we have \eqref{varcond} holds, which implies the optimality of $\barq$ by \citeprop{variation}.
This completes the proof. 
\end{proof}

Recall that $\vd$ is defined as in \citelem{lemma:G_decreasing_via_IFT}.
By \eqref{ode1a},
\begin{align*}
\barq(p)=\vd(\barh'(p), p).
\end{align*}
Taking the derivative with respect to $p$ on both sides yields 
$$
\barq'(p) = \barh''(p)\frac{\partial \vd}{\partial x}(\barh'(p), p) + \frac{\partial \vd}{\partial p}(\barh'(p), p),~\mbox{ for a.e. $p\in(0, 1)$.}
$$
Then we obtain from \eqref{ode1a} that 
$$\min\left\{ \barh''(p) \frac{\partial \vd}{\partial x}(\barh'(p), p) + \frac{\partial \vd}{\partial p}(\barh'(p), p), ~\barh(p)-\lam\eta(p)\right\} = 0, ~\mbox{ for a.e. $p\in(0, 1)$.}$$
Since $\frac{\partial \vd}{\partial x}<0$, by \cite[Lemma 4.4]{X25}, the above can be rewritten as
$$\min\Big\{{}-\barh''(p)+\opl(\barh'(p), p), \; \barh(p)-\lam\eta(p)\Big\}=0, ~\mbox{ for a.e. $p\in(0, 1)$.}$$
where
\begin{align} \label{def:opl}
\opl(x,p)=-\frac{\frac{\partial \vd}{\partial p}(x,p)}{\frac{\partial \vd}{\partial x}(x,p)}
&=\frac{\partial^2V}{\partial x\partial p}\bigl(\vd(x,p),p\bigr).
\end{align} 
Note $\barq(p)= 0$ for $p\in(0,\barp]$ and $\barh(1) = 0$, so 
the two-dimensional first-order ODE system \eqref{ode1a} reduces to a one-dimensional second-order ODE: 
\begin{align} \label{ode3}
\begin{cases}
\dsp \min\Big\{{}-\barh''(p)+\opl(\barh'(p), p), \; \barh(p)-\lam\eta(p)\Big\}=0, \; \mbox{for a.e. $p\in(\barp, 1)$;}\medskip\\
\dsp\barh(1)=0,~~\barh'(p)=\frac{\partial V}{\partial x}(0, p), \; \mbox{for $p\in(0,\barp]$.} 
\end{cases}
\end{align}
This is a variational inequality with mixed boundary conditions, which can be numerically solved. 

Summarizing all the preceding results, we derive the main result of this paper as follows. 
\begin{thm}[Optimal Solutions to \eqref{static3a} and \eqref{static2}]\label{mainthm}
Suppose $\barh$ is a solution of \eqref{ode3}. Define $\barq(p)=\vd(\barh'(p), p)$ for $p\in(0,1)$, where $\vd$ is given in \citelem{lemma:G_decreasing_via_IFT}.
If 
\[\int_0^1 \barq(p) Q_\rho(1-p)\ddp = x,\]
then $\barq$ is the unique optimal solution to \eqref{static3a}.
Moreover, $\barx=\barq(1-F_{\rho}(\rho))$ is the unique optimal solution to the problem \eqref{static2}.
\end{thm}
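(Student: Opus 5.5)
The plan is to reduce Theorem~\ref{mainthm} to Theorem~\ref{main1} and then to a standard Lagrangian duality argument. Two facts need to be established along the way: that $\barq$ defined from a solution $\barh$ of \eqref{ode3} is a continuous increasing solution of \eqref{ode1a}, and that a maximizer of the Lagrangian problem \eqref{static4} which meets the budget constraint with equality maximizes \eqref{static3a}.

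\textbf{Step 1: from \eqref{ode3} back to \eqref{ode1a}.} Set $\barq(p)=\vd(\barh'(p),p)$. By \eqref{defG1} this gives $\barh'(p)=\frac{\partial V}{\partial x}(\barq(p),p)$. On $(0,\barp]$ the boundary condition $\barh'(p)=\frac{\partial V}{\partial x}(0,p)$ together with uniqueness in \citelem{lemma:G_decreasing_via_IFT} yields $\barq=0$ there.

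On $(\barp,1)$, I would reverse the derivation preceding \eqref{ode3}. Since $\barh'$ is absolutely continuous (the natural solution class for the variational inequality) and $\vd$ is $C^1$ in $x$ with $\partial\vd/\partial p$ continuous, the chain rule gives
\[
\barq'=\frac{\partial\vd}{\partial x}\bigl(\barh''-\opl(\barh',p)\bigr)\quad\text{a.e.}
\]
Because $\frac{\partial\vd}{\partial x}<0$, this has the same sign as $-\barh''+\opl$. The equivalence lemma \cite[Lemma 4.4]{X25}, used in the forward direction, then turns $\min\{-\barh''+\opl,\barh-\lam\eta\}=0$ into $\min\{\barq',\barh-\lam\eta\}=0$. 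Hence $\barq'\ge0$ a.e., so $\barq$ is continuous and increasing. I also need $\barq\ge 0$ so that $\barq\in\setq$; this follows from monotonicity and $\barq=0$ on $(0,\barp]$.

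The remaining boundary condition $\barh(1)=0$ holds by assumption. Therefore $\barq$ is a continuous increasing solution of \eqref{ode1a}, and Theorem~\ref{main1} shows that $\barq$ is optimal for \eqref{static4} with this $\lam$.

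\textbf{Step 2: Lagrangian duality.} For any $Q\in\setq$ feasible for \eqref{static3a}, we have $\calL(Q)=\calJ_\al(Q)$. The budget hypothesis gives $\calL(\barq)=\calJ_\al(\barq)$, so
\[
\calJ_\al(Q)=\calL(Q)\le\calL(\barq)=\calJ_\al(\barq),
\]
and $\barq$ is optimal for \eqref{static3a}. Uniqueness comes from the strict concavity of $\calJ_\al$ (Lemma~\ref{lemma:J_concave_Q}) on the convex feasible set.

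\textbf{Step 3: back to the random variable.} Since $F_\rho$ is continuous, $U:=1-F_\rho(\rho)$ is uniform on $(0,1)$. Thus $\barx=\barq(U)$ has quantile $\barq$, is nonnegative and $\BF_T$-measurable, and is anticomonotonic with $\rho$. Using $\rho=Q_\rho(1-U)$ a.s., we get
\[
\BE{\rho\barx}=\int_0^1\barq(p)Q_\rho(1-p)\ddp=x.
\]
Any feasible $X$ for \eqref{static2} satisfies, by \citelem{maxminlemma}, $\int_0^1 Q_X(p)Q_\rho(1-p)\ddp\le\BE{\rho X}\le x$. If this is strict, I would scale $Q_X$ up, or add a constant, to reach equality; strict monotonicity (\citelem{Jproperties}) shows this only increases $\calJ_\al$. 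Hence $J_\al(X)\le\calJ_\al(\barq)=J_\al(\barx)$.

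For uniqueness, an optimal $X$ must have $Q_X=\barq$ and must attain equality in the rearrangement inequality. This forces anticomonotonicity with $\rho$, as in \cite{X14}, so $X=\barq(1-F_\rho(\rho))$ a.s.

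\textbf{Main obstacle.} I expect Step 1 to be the delicate part: justifying the chain rule for $\vd(\barh'(p),p)$, which requires regularity of $\barh'$, and invoking \cite[Lemma 4.4]{X25} in the reverse direction. I would first assume $\barh\in W^{2,\infty}_{loc}$ (or $C^1$ with absolutely continuous derivative) as part of what it means to be a solution of \eqref{ode3}. The uniqueness claim in Step 3 is a secondary point, which I would settle by citing \cite{X14}.
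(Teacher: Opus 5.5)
Your route is the one the paper has in mind: the paper gives no separate proof and presents Theorem~\ref{mainthm} as a summary. That summary consists of the reduction from \eqref{ode1a} to \eqref{ode3} run backwards through $\vd$, Theorem~\ref{main1}, the Lagrangian argument, and the anticomonotonicity result of \cite{X14}. Your Steps 2 and 3 are fine.

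There is one unverified claim in Step 1. After checking $\barh(1)=0$ you conclude that $\barq$ solves \eqref{ode1a}. But \eqref{ode1a} requires $\min\{\barq'(p),\barh(p)-\lam\eta(p)\}=0$ for a.e.\ $p$ in all of $(0,1)$, whereas \eqref{ode3} imposes the variational inequality only on $(\barp,1)$. On $(0,\barp]$ you have $\barq'=0$, so you still need $\barh\ge\lam\eta$ there. This matters: the backward direction of Theorem~\ref{main1} uses $\barh\ge\lam\eta$ on all of $[0,1]$ (see \eqref{ineq4}) when it tests \eqref{seta-2} with $Q=\idd{p\in[a,1)}$, including $a<\barp$.

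The fix uses the definition of $\barp$. Since $\barh\ge\lam\eta$ a.e.\ on $(\barp,1)$ and both functions are continuous, $\barh(\barp)\ge\lam\eta(\barp)$. For $p\in(0,\barp]$, \eqref{ineq1} at $x=0$ and the defining property $\lam Q_\rho(1-p)\ge u'(\qpr(0))$ give
\[
\bigl(\barh-\lam\eta\bigr)'(p)=\frac{\partial V}{\partial x}(0,p)-\lam Q_\rho(1-p)\le u'\bigl(\qpr(0)\bigr)-\lam Q_\rho(1-p)\le 0 .
\]
Hence $\barh-\lam\eta$ is non-increasing on $(0,\barp]$ and therefore nonnegative there. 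With this added, $\barq$ really does solve \eqref{ode1a} on all of $(0,1)$.

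Keep your regularity assumption, and impose it on all of $(0,1)$, across $\barp$ as well. Continuity of $\barh'$ at $\barp$ is what makes $\barq$ continuous there. Absolute continuity of $\barh'$ makes $\barq$ absolutely continuous, which is exactly what lets the proof of Theorem~\ref{main1} upgrade the a.e.\ condition $\min\{\barq',\barh-\lam\eta\}=0$ to $(\barh-\lam\eta)\,\dd\barq=0$. A merely continuous increasing $\barq$ with a singular part would satisfy the a.e.\ condition vacuously.
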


\section{Numerical Study}\label{sec:numerical}
This section presents numerical experiments that illustrate our theoretical results and provide economic insights. Our findings demonstrate that the presence of an intractable claim significantly affects the optimal payoff structure. 

\paragraph{Numerical Methodology.}
We solve the two-dimensional first-order ODE system \eqref{ode1a} for $(\barh, \barq)$ using a forward Euler scheme with penalization to enforce constraints. The solution procedure starts with initial conditions $\barq(0)=0$ and $\barq'(0)=e$, where the parameter $e$ is tuned to satisfy the terminal boundary condition $H(1)=0$. 

\paragraph{Model Specification.}
In our numerical studies, we adopt the following specifications: 

\begin{compactitem}
\item {Intractable Claim $\pr$}: Uniform distribution on $[0,y]$ with quantile
\[
Q_{\pr}(p) = y p, \quad p\in(0,1).
\]

\item {Utility Function}: Mixture of two exponential utilities
\[
u(x) = -c_1 e^{-\gamma_1 x} - c_2 e^{-\gamma_2 x}.
\]

\item {Pricing Kernel $\rho$}: Lognormal distribution with $\log\rho \sim N(\mu_{\log\rho},\sigma^2_{\log\rho})$, where
\[
\mu_{\log\rho} = -\Big(r+\frac12\theta^2\Big)T, \quad
\sigma_{\log\rho} = |\theta|\sqrt{T},
\]
with $\theta$ the market price of risk, $r$ the interest rate, and $T$ the maturity. Its quantile is
\[
Q_\rho(p) = \exp\big(\mu_{\log\rho} + \sigma_{\log\rho}\Phi^{-1}(p)\big), \quad p\in(0,1),
\]
where $\Phi^{-1}$ is the standard normal quantile. 
\end{compactitem}

Unless otherwise stated, we fix $r=0.02$, $T=1$, and $y=2$.

\subsection{Quantile of Pricing Kernel under Different $\theta$}

Figure~\ref{fig1_theta} displays the quantile function $Q_\rho$ of the pricing kernel $\rho$ under different market prices of risk $\theta$, with investor parameters fixed as
\[
\alpha = 0.25,\quad c_1 = 950,\quad c_2 = 950,\quad \gamma_1 = 0.010,\quad \gamma_2 = 0.012.
\]

{The quantile increases more rapidly as $\theta$ grows, and approaches a flatter shape for smaller $\theta$. This reflects the increased dispersion of state prices in more volatile market environments.}

\begin{figure}[H]
\centering
\includegraphics[width=0.68\textwidth]{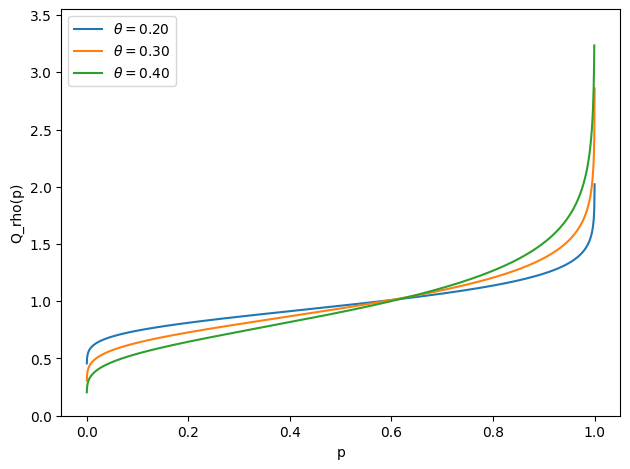}
\caption{Quantile function $Q_\rho(p)$ of the pricing kernel $\rho$ under different market prices of risk $\theta$.}
\label{fig1_theta}
\end{figure}

\subsection{Relationship between Initial Endowment $x$ and Lagrange Multiplier $\lambda$}

Recall that problem \eqref{static3a} imposes the budget constraint
\[
\int_0^1 Q(p) Q_\rho(1-p)\ddp = x.
\]
We remove this constraint via the Lagrange multiplier method, leading to problem \eqref{static4}. Figure~\ref{fig2} illustrates the one-to-one correspondence between the initial endowment $x$ and the Lagrange multiplier $\lambda$.

{As expected from concavity of the quantile optimization problem, the relationship is monotonically decreasing, indicating a well-behaved mapping. The nonlinear shape reflects the underlying complexity of the problem.}

\begin{figure}[H]
\centering
\includegraphics[width=0.68\textwidth]{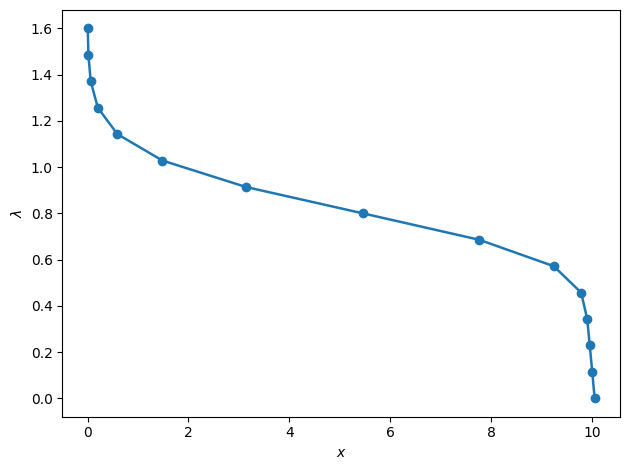}
\caption{Relationship between initial endowment $x$ (horizontal axis) and Lagrange multiplier $\lambda$ (vertical axis).}
\label{fig2}
\end{figure}

\subsection{Optimal Payoff under Different $\theta$}

We now examine how market conditions affect the optimal payoff. Fix investor parameters as
\[
x = 7.66,\quad \alpha = 0.25,\quad c_1 = 950,\quad c_2 = 950,\quad \gamma_1 = 0.010,\quad \gamma_2 = 0.012.
\]

Figure~\ref{fig4_theta} plots the optimal payoff profiles $\rho \mapsto \barq(1-F_\rho(\rho))$ under different $\theta$. The market price of risk $\theta$ captures the compensation per unit risk; a higher $\theta$ implies greater dispersion in state prices across market scenarios.

{Holding the budget fixed, increasing $\theta$ alters the relative cost of transferring wealth across states, thereby reshaping the optimal payoff profile. A larger $\theta$ improves performance in both the good (small $\rho$) and bad (large $\rho$) market states, while reducing it in intermediate states. These differences are most pronounced in the tails, where state prices are extreme --- consistent with dispersion amplifying marginal pricing differences between favorable and unfavorable states.}

\begin{figure}[H]
\centering
\includegraphics[width=0.68\textwidth]{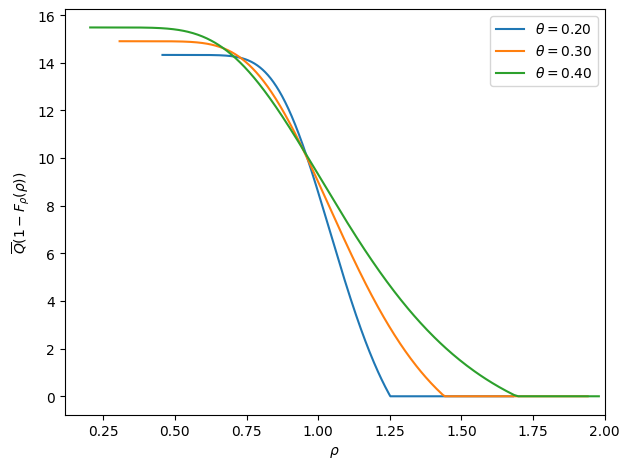}
\caption{Optimal payoff profiles $\rho \mapsto \barq(1-F_\rho(\rho))$ under different market prices of risk $\theta$.}
\label{fig4_theta}
\end{figure}

\subsection{Sensitivity Analysis with Respect to Investor Parameters}

{Having examined market effects, we now fix the market environment and study how investor characteristics influence optimal payoffs.}

\subsubsection{Impact of Initial Endowment $x$}

Fix investor parameters as
\[
\alpha = 0.25,\quad c_1 = 950,\quad c_2 = 950,\quad \gamma_1 = 0.010,\quad \gamma_2 = 0.012.
\]

Figure~\ref{fig3} displays optimal payoff profiles under different initial endowments $x$. {Larger budgets shift the entire profile upward, particularly in favorable states where transferring wealth is cheaper. This reflects the relaxation of the budget constraint, allowing greater payoff delivery in states with lower pricing kernel values.}

\begin{figure}[H]
\centering
\includegraphics[width=0.68\textwidth]{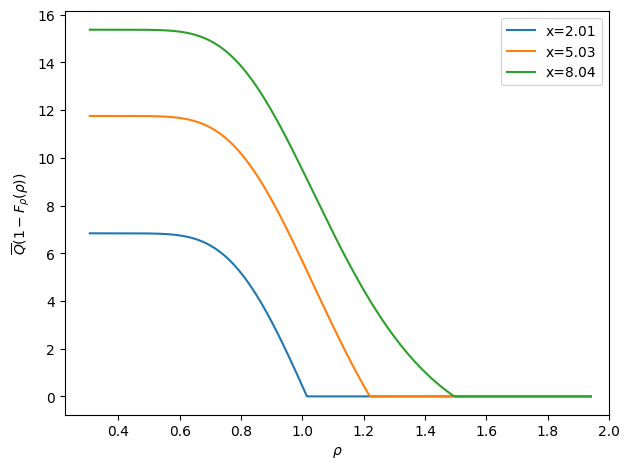}
\caption{Optimal payoff profiles $\rho \mapsto \barq(1-F_\rho(\rho))$ under different initial endowments $x$.}
\label{fig3}
\end{figure}

\subsubsection{Impact of Utility Function Parameters}

Fix $y = 8$, and investor parameters as
\[
x=6.26,\quad\alpha = 0.25,\quad c_1 = 950,\quad c_2 = 950.
\]
Figure~\ref{fig7} illustrates the effect of utility parameters on optimal payoff profiles. {Smaller risk aversion parameters (corresponding to smaller $\gamma_1$, $\gamma_2$ values) yield more stable performance across states. While higher risk aversion may improve performance in good market states, it comes at the cost of worse outcomes in bad states --- a classical risk-return trade-off.}

\begin{figure}[H]
\centering
\includegraphics[width=0.68\textwidth]{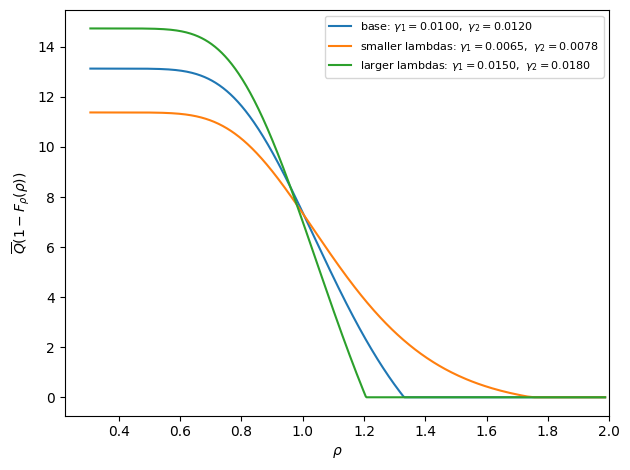}
\caption{Optimal payoff profiles $\rho \mapsto \barq(1-F_\rho(\rho))$ under different utility function parameters $\gamma_1$, $\gamma_2$. 
}
\label{fig7}
\end{figure}

Fix $y = 10$, and investor parameters as
\[
x=9.72,\quad \alpha = 0.25,\quad \gamma_1 = 0.01,\quad \gamma_2 = 0.018.
\]
Figure~\ref{fig7_2} illustrates the effect of utility parameters on optimal payoff profiles. 
\begin{figure}[H]
\centering
\includegraphics[width=0.68\textwidth]{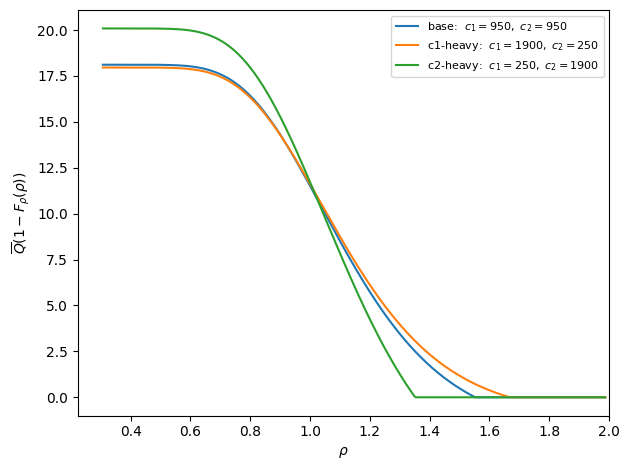}
\caption{Optimal payoff profiles $\rho \mapsto \barq(1-F_\rho(\rho))$ under different utility function parameters $c_1$, $c_2$. 
}
\label{fig7_2}
\end{figure}
\subsubsection{Impact of Ambiguity Attitude $\alpha$}

To examine the effect of the ambiguity attitude $\alpha$, we set $y=20$ and fix investor parameters as
\[
x = 15.17,\quad c_1 = 200,\quad c_2 = 3600,\quad \gamma_1 = 0.0008,\quad \gamma_2 = 0.0800.
\]
Figure~\ref{fig5} shows that higher $\alpha$ (greater optimism) improves performance, particularly in good market scenarios, although the differences are modest under this parameter specification. 

\begin{figure}[H]
\centering
\includegraphics[width=0.68\textwidth]{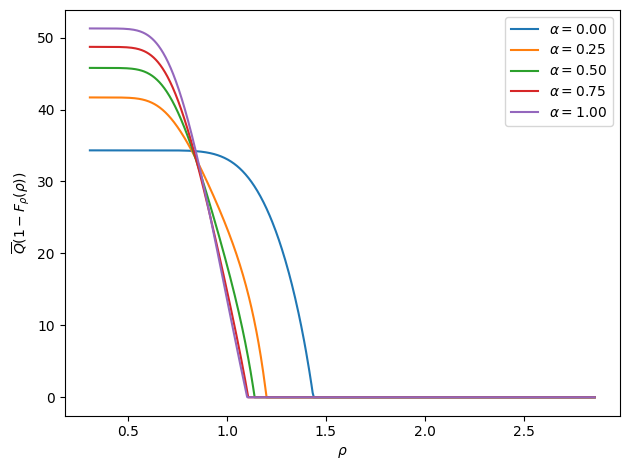}
\caption{Optimal payoff payoffs $\rho \mapsto \barq(1-F_\rho(\rho))$ under different ambiguity attitudes $\alpha$.}
\label{fig5}
\end{figure}

\subsection{Impact of Intractable Claim Distribution}

Finally, we investigate how the distribution of the intractable claim $\pr$ affects optimal payoffs. We fix $$x=0.17,~\alpha=0.6,~c_1= 0.5,~c_2 = 0.5,~\gamma_1 = 1, \gamma_2= 2,$$ 
and assume $\pr$ follows a normal distribution $N(\mu_\pr,\sigma_\pr)$ truncated on $[a,b]$, written as TN$(\mu_\pr,\sigma_\pr)[a,b]$ in Figure~\ref{fig6}.

The results demonstrate that both location and dispersion parameters significantly influence the optimal payoff profile. Increasing the mean $\mu_\pr$ shifts the curve upward, while increasing the standard deviation $\sigma_\pr$ amplifies curvature and heterogeneity across quantiles. This is consistent with greater dispersion in the claim distribution feeding into the marginal value of wealth $\frac{\partial V}{\partial x}$ and thereby reshaping the optimal allocation across states. 

\begin{figure}[H]
\centering
\includegraphics[width=0.68\textwidth]{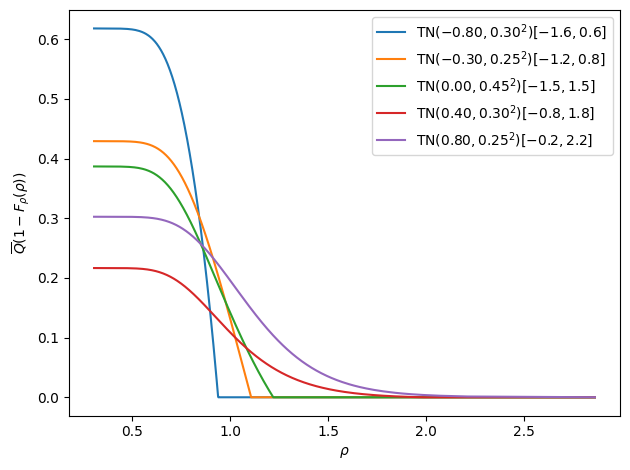}
\caption{Optimal payoff profiles $\rho \mapsto \barq(1-F_\rho(\rho))$ under different intractable claim distributions.}
\label{fig6}
\end{figure}

In summary, our numerical experiments confirm that the intractable claim significantly affects optimal investment strategies. The quantile-based approach successfully captures these effects and reveals how market conditions, investor preferences, and claim characteristics interact in determining optimal payoffs. 

\section{Concluding remarks}\label{sec:conclusion}

This paper studied an $\alpha$-robust utility maximization problem in the presence of an intractable claim --- an exogenous contingent claim with known marginal distribution but unspecified dependence structure with the financial market. By combining rearrangement theory with quantile optimization methods, we transformed the original dynamic stochastic control problem into a concave static optimization over quantile functions. The optimal quantile was characterized via a two-dimensional first-order ODE system, which we solved numerically to obtain economic insights. 

Our framework makes several contributions. First, it generalizes both worst-case \cite{LXZ23} and best-case evaluations through a continuous parameter $\alpha$ that captures the investor's ambiguity attitude. Second, it demonstrates that law-invariance of the $\alpha$-robust risk measure, established via comonotonicity theory, permits a tractable quantile reformulation without requiring any assumptions on the joint dependence structure. Third, our numerical studies reveal how market conditions, investor preferences, and the distribution of the intractable claim interact to shape optimal payoffs. 

The quantile-based approach developed here naturally accommodates additional risk constraints. Popular risk measures such as Value-at-Risk (VaR) and Expected Shortfall (ES) can be incorporated as explicit constraints on the quantile function. While we focused on the unconstrained problem to elucidate the core economic trade-offs, the extension to constrained settings represents a promising direction for future research. Many interesting extensions would further enhance the applicability of robust optimization methods in portfolio choice under dependence uncertainty.

%\newpage

\end{document}